\theoremstyle{plain}
\newtheorem{theorem}{Theorem}
\theoremstyle{plain}
\newtheorem{proposition}{Proposition}
\theoremstyle{plain}
\newtheorem{corollary}{Corollary}
\theoremstyle{plain}
\newtheorem{lemma}{Lemma}
\theoremstyle{plain}
\newcommand\va{{\bf a}} 
\newcommand\vc{{\bf c}}
\newcommand\vd{{\bf d}}
\newcommand\vf{{\bf f}}
\newcommand\vh{{\bf h}}
\newcommand\vr{{\bf r}}
\newcommand\vs{{\bf s}}
\newcommand\vu{{\bf u}}
\newcommand\vv{{\bf v}}
\newcommand\vw{{\bf w}}
\newcommand\vx{{\bf x}}
\newcommand\mA{{\bf A}} 
\newcommand\mE{{\bf E}}
\newcommand\mF{{\bf F}}
\newcommand\mG{{\bf G}}
\newcommand\mH{{\bf H}}
\newcommand\mR{{\bf R}}
\newcommand\mS{{\bf S}}
\newcommand\mU{{\bf U}}
\newcommand\mW{{\bf W}}
\newcommand\mX{{\bf X}}
\newcommand\defi{{\triangleq}}
\begin{document}

\def\QEDclosed{\mbox{\rule[0pt]{1.3ex}{1.3ex}}}
\def\QEDopen{{\setlength{\fboxsep}{0pt}\setlength{\fboxrule}{0.2pt}\fbox{\rule[0pt]{0pt}{1.3ex}\rule[0pt]{1.3ex}{0pt}}}}
\def\QED{\QEDopen}
\def\proof{}
\def\endproof{\hspace*{\fill}~\QED\par\endtrivlist\unskip}

\title{A Partial Channel Reciprocity-based Codebook for Wideband FDD Massive MIMO}

\author{Haifan~Yin
    and David~Gesbert,~\IEEEmembership{Fellow,~IEEE}
\thanks{

This work was supported by National Natural Science Foundation of China under grant No. 62071191. The corresponding author is Haifan Yin.
}
\thanks{H. Yin is with Huazhong University of Science and Technology, 430074 Wuhan, China (e-mail: yin@hust.edu.cn).
}%
\thanks{D. Gesbert is with EURECOM, 06410 Biot, France (e-mail: gesbert@eurecom.fr).}
}

\maketitle

\begin{abstract}

The acquisition of channel state information (CSI) in Frequency Division Duplex (FDD) massive MIMO has been a formidable challenge. In this paper, we address this problem with a novel CSI feedback framework enabled by the partial reciprocity of uplink and downlink channels in the wideband regime. We first derive the closed-form expression of the rank of the wideband massive MIMO channel covariance matrix for a given angle-delay distribution. A low-rankness property is identified, which generalizes the well-known result of the narrow-band uniform linear array setting. Then we propose a partial channel reciprocity (PCR) codebook, inspired by the low-rankness behavior and the fact that the uplink and downlink channels have similar angle-delay distributions. Compared to the latest codebook in 5G, the proposed PCR codebook scheme achieves higher performance, lower complexity at the user side,  and requires a smaller amount of feedback. We derive the feedback overhead necessary to achieve asymptotically error-free CSI feedback. Two low-complexity alternatives are also proposed to further reduce the complexity at the base station side. Simulations with the practical 3GPP channel model show the significant gains over the latest 5G codebook, which prove that our proposed methods are practical solutions for 5G and beyond.

\end{abstract}

\begin{IEEEkeywords}
Massive MIMO, FDD, reciprocity, covariance matrix, 5G
\end{IEEEkeywords}

\newcounter{EquCounter}

\section{Introduction}\label{sec_intro}

The large-scale commercialization of 5G cellular systems gradually brings the concept of massive multiple-input multiple-output (MIMO) \cite{marzetta:10a} to reality. With a large number of antennas at the base station, massive MIMO leads to much higher spectral and energy efficiencies compared with traditional MIMO.
The stronger pair-wise orthogonality between the channel vectors of different User Equipments (UEs) allows for higher interference rejection capability for the base stations. In other words, the narrow beam of the massive antenna array reduces the leakage of the transmitted signal towards unwanted UEs, which is one of the main features of massive MIMO.
Nevertheless, the performance of massive MIMO is primarily determined by the accuracy of the CSI. In practical cellular systems, there are mainly three reasons of inaccurate CSI, which are pilot contamination \cite{jose2011TWC}, the mobility problem  \cite{yin2020JSAC}, and the CSI feedback with limited overhead in FDD \cite{marzetta:10a}.

The uplink (UL) and downlink (DL) in FDD occupy different frequency bands, which are separated by around 100 MHz in current 5G setups - a gap much larger than the channel coherence bandwidth. Therefore, unlike in Time Division Duplex (TDD) mode, the UL and DL channels in FDD are typically non-reciprocal. As a result, the DL CSI is mainly obtained by closed-loop feedback methods: The base station (BS) first transmits reference signals, e.g., Channel State Information Reference Signal (CSI-RS) as in 5G, then the UEs estimate their individual channels, and finally the UEs send back the quantized channel information to the base station. This traditional CSI feedback scheme suffers from two problems, 1) the time-frequency resource spent on the reference signals increases quickly with the number of base station antennas, leaving less resource for data transmission; and 2) the CSI feedback is always corrupted by quantization errors.

In the literature, there have been many proposals to enhance the performance of FDD massive MIMO. The works in \cite{yin:13} and \cite{adhikary:13} indicated that the spatial channel covariance matrix of the UE channels exhibits a certain low-rankness property, where the rank is governed by the angular spread of the multipath. Such a property helps to reject interference from non-orthogonal training signals of other UEs which have non-overlapping angular distribution with the desired UE, thus reducing the training overhead. In some other related works, an idea of covariance shaping was proposed and studied \cite{newinger2015} \cite{moghadam2017} \cite{mursia2018} \cite{khalilsarai2018}, where the effective channels with low-dimensionality were manually created and exploited through precoding.
Some papers proposed to adopt Deep Learning for CSI compression and reconstruction \cite{Wen2018, Liang2020, Lu2020}. However these learning-based method generally require training with large datasets.

The possibility of exploiting the reciprocity in angular domain has been studied in \cite{luo2017, ding2018, zhang2018, shen2018}. In particular, \cite{luo2017} proposed a scalable CSI feedback method which exploited the angle and delay reciprocity of the FDD channel. A Fast Fourier Transform (FFT)-based pilots were used and the UE computed its low-dimensional feedback based on the DL channel statistics measured by itself or instructed by its serving base station. The authors found that only a scalar per each DL channel path was needed for the base station to reconstruct the DL channel.
\cite{ding2018} proposed a dictionary learning approach which relied on the reciprocity and the sparsity of the multipath angles in UL and DL channels. \cite{zhang2018} proposed a directional training method which was also motivated by the angular reciprocity. The authors in \cite{shen2018} utilized the angular reciprocity in designing an angle-adaptive codebook and proved that the amount of feedback scales with the number of paths, which was assumed much smaller than the number of base station antennas.

In Release-16 (Rel-16) of the 5G standards, the Enhanced Type II Codebook is adopted \cite{3gpp:38.214}. Such a codebook capitalizes on the sparsity in both spatial domain and frequency domain by introducing a 2D Discrete Fourier Transform (DFT) operation on the wideband channel matrix. 
Briefly speaking, the base station first broadcasts non-precoded CSI-RS to its serving UEs.
The UE then estimates its wideband channel matrix.
Afterwards, the UE performs 2D DFT on the wideband precoding matrix, where each column of this matrix is a precoder for a certain frequency band.
Finally the UE sends back the non-negligible elements (quantized) in the transformed matrix to the base station along with their corresponding positions in the transformed matrix.

In this paper, we propose a novel CSI feedback framework based on an idea of joint spatial-frequency domain precoded reference signal. This framework is enabled by the partial UL/DL reciprocity of angle-delay distributions, which was recently validated by measurements \cite{Zhong2020}. In the traditional methods, the precoding is always performed in spatial domain. However in this paper, the frequency-domain precoding is made possible by our scheme,  which entails a joint operation of the BS and the UE. The basic idea is to exploit the channel sparsity in angle-delay domain by finding the non-negligible spatial-frequency-domain projections from the wideband uplink channel estimation, and then design the downlink joint precoder based on the positions of these projected values. In this way, we only sample the non-negligible coefficients of the sparse representation of the channel, which leads to significant training overhead reduction. In practice, the projection is realized by either the eigenvectors of the wideband channel covariance matrix or the 2D DFT matrix.

We show that the amount of feedback coefficients can be greatly reduced, which is always smaller than the total number of multipaths. In the meantime, the computational complexity for the UE is quite low - only several summations of the estimates of the effective channels will suffice. This is well-aligned with the commercial interest to simplify the implementation of UEs.
Different from \cite{luo2017}, the UEs do not need any information about the DL channel statistics or path delays, nor do they need to perform path aligning process, and the complexity of our method is much lower.

Compared with the latest Rel-16 codebook, i.e., the Enhanced Type II Codebook, our proposed PCR codebook scheme has five main advantages.
 \begin{itemize}
 \item PCR scheme achieves better performance in terms of the CSI feedback accuracy. It is because of the exploitation of the second-order channel statistics.
 Compared to the DFT-based methods, e.g., the Rel-16 codebook, the dimension of the signal subspace of a channel covariance matrix is smaller than the number of non-negligible DFT projections. This is because of the spectrum leakage of DFT projections, particularly when the size of the DFT matrix is relatively small.

 \item The PCR method has lower computational complexity. In fact the Enhanced Type II Codebook entails a 2D DFT operation, and in some cases, the singular value decomposition (SVD). While our method only requires some scalar additions instead of matrix operations.

 \item The proposed method has a smaller feedback overhead. There are two reasons of the feedback reduction. 1) The eigenvector based compression is more effective than the DFT based compression; and 2) In Rel-16 codebook, apart from feeding back the quantized complex coefficients, the UE has to feed back the indices of the corresponding DFT vectors as well. While in our proposed feedback scheme, the UE only needs to feedback the quantized coefficients.

 \item The proposed method has a better quality of channel estimation at the UE side. In the Enhanced Type II Codebook, the BS broadcasts non-precoded CSI-RS to all UEs. No beamforming is performed at the base station. In such cases, the Signal to Noise Ratio (SNR) of the received CSI-RS might be low.
 In contrast, our proposed codebook scheme enables precoded CSI-RS. In other words, the base station transmits beamformed reference signals to UEs, which is similar to the dominant eigen-beamforming. As a result, the receive SNR of the reference signal is higher.

 \item The proposed scheme relies on no assumption of the topology of the BS antenna array. In fact, the Enhanced Type II Codebook in Rel-16 makes the explicit assumption that the base station is equipped with a Uniform Planar Array (UPA). When UE performs 2D DFT transformations, the size of the DFT matrix is determined by the numbers of columns and rows of antenna elements on the antenna panel. A mismatch will result in obvious performance degradations. However, our propose method works with arbitrary antenna topologies, and is therefore more general.

 \end{itemize}

The contributions of our paper are summarized below:
\begin{itemize}
\item We generalize the known low-rankness property of channel covariance matrix for a narrowband large-scale uniform linear array (ULA) to the narrowband UPA array, and to the wideband UPA array. Closed-form expressions of the ranks are derived.

\item We propose a practical and scalable PCR codebook scheme, which is a very competitive candidate for the standards of 5G and beyond. This scheme exploits the UL and DL channel reciprocity of the statistical signal subspace in joint spatial-frequency domains, as well as the proved low-rankness property of the wideband channel covariance matrix.

\item We make performance analysis and give the upper-bound of the feedback overhead to achieve asymptotically error-free CSI feedback. We show that the feedback overhead is smaller than the number of paths.

\item Two low-complexity alternatives named PCR-E and PCR-D are proposed, which further reduce the complexity at the base station side at the cost of mild performance loss.

\end{itemize}

We evaluate the performances of our proposed method under the realistic channel model of 3GPP, which is widely adopted in industry and the de facto model used for 5G standardization.
Simulation results show that in the wideband massive MIMO regime, the UE only needs to send back tens of scalar values to the base station to achieve near optimal performance, even in the presence of hundreds of multipath.

The organization of this paper is as follows: Sec. \ref{modeling} introduces the channel model of 3GPP and the channel reciprocity modelling for FDD \cite{3gpp:36.897}. Sec.  \ref{sec:covaRank} derives the rank of the channel covariance matrices for a given angle-delay distribution. Sec. \ref{sec:PCR} shows the proposed PCR codebook scheme and the asymptotic performance analysis. Sec. \ref{sec:lowComplex} describes the low-complexity alternatives.
The simulation results are shown in Sec. \ref{sec:numericalResult}. Finally, the conclusions are drawn in Sec. \ref{sec:conclusion}.

Notations: We use boldface to denote matrices and vectors. Specifically, ${\mathbf{I}}$ denotes the identity matrix. ${({\mathbf{X}})^T}$, ${({\mathbf{X}})^*}$, and ${({\mathbf{X}})^H}$ denote the transpose, conjugate, and conjugate transpose of a matrix ${\mathbf{X}}$ respectively. ${({\mathbf{X}})^\dag}$ is the Moore-Penrose pseudoinverse of ${\mathbf{X}}$. $\operatorname{tr}\left\{ \cdot \right\}$ denotes the trace of a square matrix.
${\left\| \cdot \right\|_2}$ denotes the $\ell^2$ norm of a vector.
${\left\| \cdot \right\|_F}$ stands for the Frobenius norm.
$\mathbb{E}\left\{ \cdot \right\}$ denotes the expectation.
${\bf{X}}\otimes{\bf{Y}}$ is the Kronecker product of ${\bf{X}}$ and ${\bf{Y}}$. ${\bf{X}} \odot {\bf{Y}}$ denotes the Hadamard product of ${\bf{X}}$ and ${\bf{Y}}$.
 $\operatorname{vec} (\mX)$ is the vectorization of the matrix $\mX$, and  $\operatorname{unvec} (\vx)$ is the corresponding inverse operation.
${\mathop{\rm diag}\nolimits} \{ {{\va_1,...,\va_N}}\}$ denotes a diagonal matrix or a block diagonal matrix with ${\va_1,...,\va_N}$ at the main diagonal. $\triangleq$ is used for definition. $\mathbb{N}$ and $\mathbb{N}^+$ are the set of non-negative and positive integers respectively.

\section{Channel Models}\label{modeling}
We consider an arbitrary UE in a certain cell. The base station is equipped with a uniform planar array (UPA) of cross-polarized antenna elements. The total number of multipaths is $M$. For a certain path $m$, we denote the Zenith angle Of Arrival (ZOA) as $\theta _\text{m,ZOA}$, the Azimuth angle Of Arrival (AOA) as $\varphi _\text{m,AOA}$, the Zenith angle Of Departure (ZOD) as $\theta _\text{m,ZOD}$, and the Azimuth angle Of Departure (AOD) as $\varphi _\text{m,AOD}$.
The channel impulse response between the $s$-th BS antenna and the $u$-th UE antenna at time $t$ is shown in Eq. (\ref{Eq:channelModel}).

\begin{figure*}[!t]
\normalsize
\setcounter{EquCounter}{\value{equation}}
\begin{align}\label{Eq:channelModel}
\begin{array}{l}
H_{u,s,m}(t) = {\left[ {\begin{array}{*{20}{c}}
{{F_{\text{rx,}u,\theta }}\left( {{\theta _{m\text{,ZOA}}},{\varphi _{m\text{,AOA}}}} \right)}\\
{{F_{\text{rx,}u,\varphi }}\left( {{\theta _{m\text{,ZOA}}},{\varphi _{m\text{,AOA}}}} \right)}
\end{array}} \right]^T}\left[ {\begin{array}{*{20}{c}}
{\exp \left( {j\Phi _{m}^{\theta \theta }} \right)}&{\sqrt {{\kappa^{-1}_{m}}} \exp \left( {j\Phi _{m}^{\theta \varphi }} \right)}\\
{\sqrt {{\kappa _{m}}^{ - 1}} \exp \left( {j\Phi _{m}^{\varphi \theta }} \right)}&{\exp \left( {j\Phi _{m}^{\varphi \varphi }} \right)}
\end{array}} \right]\\
\left[ {\begin{array}{*{20}{c}}
{{F_{\text{tx,s},\theta }}\left( {{\theta _{m\text{,ZOD}}},{\varphi _{m\text{,AOD}}}} \right)}\\
{{F_{\text{tx,s},\varphi }}\left( {{\theta _{m\text{,ZOD}}},{\varphi _{m\text{,AOD}}}} \right)}
\end{array}} \right]\exp \left( {j2\pi \frac{{\hat \vr_{\text{rx,}m}^T{{\bar \vd}_{\text{rx,}u}}}}{{{\lambda _0}}}} \right)\exp \left( {j2\pi \frac{{\hat \vr_{\text{tx,}m}^T{{\bar \vd}_{\text{tx,}s}}}}{{{\lambda _0}}}} \right)\exp \left( {j2\pi \frac{{\hat \vr_{\text{rx,}m}^T\bar \vv}}{{{\lambda _0}}}t} \right)
\end{array}
\end{align}
\hrulefill
\vspace*{4pt}
\end{figure*}
${{F_{{\text{rx,}u},\theta }}\left( {{\theta _{m\text{,ZOA}}},{\varphi _{m\text{,AOA}}}} \right)}$, ${{F_{{\text{rx,}u},\varphi }}\left( {{\theta _{m\text{,ZOA}}},{\varphi _{m\text{,AOA}}}} \right)}$, ${{F_{{\text{tx,}s},\theta }}\left( {{\theta _{m\text{,ZOD}}},{\varphi _{m\text{,AOD}}}} \right)}$, and ${{F_{{\text{tx,}s},\varphi }}\left( {{\theta _{m\text{,ZOD}}},{\varphi _{m\text{,AOD}}}} \right)}$ are the field patterns defined in Section 7.3 of \cite{3gpp:38.901}. $\kappa_m$ is the cross polarization ratio (XPR), which follows a log-Normal distribution. $\left\{ {\Phi _{m}^{\theta \theta },\Phi _{m}^{\theta \varphi },\Phi _{m}^{\varphi \theta },\Phi _{m}^{\varphi \varphi }} \right\}$ are the random initial phases for four different polarization combinations. These random phases are uniformly distributed within $(-\pi, \pi)$. $\lambda_0$ is the wavelength of the center frequency. $\hat{\vr}_{\text{rx},m}$ is the spherical unit vector with Azimuth arrival angle $\varphi_{m, \text{AOA}}$ and Zenith arrival angle $\theta_{m, \text{ZOA}}$:
\begin{equation}\label{Eq:rhat_rx_m}
\hat{\vr}_{\text{rx},m} \triangleq \left[ {\begin{array}{*{20}{c}}
{\sin {\theta_{m, \text{ZOA}}}\cos {\varphi_{m, \text{AOA}}}}\\
{\sin {\theta_{m, \text{ZOA}}}\sin {\varphi_{m, \text{AOA}}}}\\
{\cos {\theta_{m, \text{ZOA}}}}
\end{array}} \right].
\end{equation}
Likewise, $\hat{\vr}_{\text{tx},m}$ is the spherical unit vector defined as:
\begin{equation}\label{Eq:rhat_tx_m}
\hat{\vr}_{\text{tx},m} \triangleq \left[ {\begin{array}{*{20}{c}}
{\sin {\theta_{m, \text{ZOD}}}\cos {\varphi_{m, \text{AOD}}}}\\
{\sin {\theta_{m, \text{ZOD}}}\sin {\varphi_{m, \text{AOD}}}}\\
{\cos {\theta_{m, \text{ZOD}}}}
\end{array}} \right].
\end{equation}
$\bar{\vd}_{\text{rx}, u}$ is the 3D Cartesian coordinate of the $u$-th UE antenna, and $\bar{\vd}_{\text{tx}, s}$ is the 3D Cartesian coordinate of the $s$-th base station antenna.
The exponential term $e^{j({\hat{\vr}_{\text{rx},p}^T \bar{\vv} }/{\lambda_0}) t}$ is the Doppler, where $\bar{\vv}$ is the UE velocity vector with speed $v$, travel azimuth angle $\varphi_v$, and travel Zenith angle $\theta_v$:
\begin{equation}\label{Eq:v_bar}
\bar{\vv} \triangleq v {[\begin{array}{*{20}{c}}
{\sin {\theta _v}\cos {\varphi _v}}&{\sin {\theta _v}\sin {\varphi _v}}&{\cos {\theta _v}}
\end{array}]^T}.
\end{equation}
Channel reciprocity is modeled according to \cite{3gpp:36.897} and \cite{kyosti2007winner}, where the following small-scale parameters are reciprocal:
\begin{itemize}
\item The angles $\theta_{m, \text{ZOA}}, \theta_{m, \text{ZOD}}, \varphi_{m, \text{AOA}}$, and $\varphi_{m, \text{AOD}}$.
\item The delay and the power of the path.
\item The cross polarization ratio $\kappa _{m}$.
\end{itemize}
The non-reciprocal channel parameters are listed below.
\begin{itemize}
\item The wavelength of the center frequency $\lambda_0$ and the related phase terms in Eq. (\ref{Eq:channelModel}).
\item Path loss factors for UL and DL.
\item The initial phases $\left\{ {\Phi _{m}^{\theta \theta },\Phi _{m}^{\theta \varphi },\Phi _{m}^{\varphi \theta },\Phi _{m}^{\varphi \varphi }} \right\}$, which are independent for UL and DL channels.
\end{itemize}

The topology of antennas at the base station is a UPA with $N_v$ rows and $N_h$ columns of antenna elements. The horizontal and vertical antenna spacings are denoted by ${D_h}$ and ${D_v}$ respectively. The total number of base station antennas is $N_t = N_v N_h N_p$, where $N_p = \{1, 2\}$ is the number of polarizations. For ease of exposition, we assume the antenna panel is in YZ plane of our coordinate system, with the first antenna element on the origin. The indices of the antennas are arranged the same way as in \cite{yin2020JSAC}, i.e., we order the antennas column by column. According to the channel model Eq. (\ref{Eq:channelModel}), we introduce a 3D steering vector, which is corresponding to a certain path with Azimuth departure angle $\varphi_\text{AOD}$ and Zenith departure angle $\theta_\text{ZOD}$.
\begin{equation}
{\bf{a}}({\theta}_\text{ZOD},{\varphi}_\text{AOD}) = {{\bf{a}}_h}({\theta}_\text{ZOD},{\varphi}_\text{AOD}) \otimes {{\bf{a}}_v}({\theta}_\text{ZOD}),
\end{equation}
where
\begin{align}
{{\bf{a}}_h}({\theta _{{\rm{ZOD}}}},{\varphi _{{\rm{AOD}}}}) = \left[ {\begin{array}{*{20}{c}}
1\\
{{e^{j2\pi {{D_h}\sin ({\theta _{{\rm{ZOD}}}})\sin ({\varphi _{{\rm{AOD}}}})}/\lambda _0}}}\\
 \vdots \\
{{e^{j2\pi {({N_h} - 1){D_h}\sin ({\theta _{{\rm{ZOD}}}})\sin ({\varphi _{{\rm{AOD}}}})}/{{\lambda _0}}}}}
\end{array}} \right],
\end{align}
and
\begin{equation}
{{\bf{a}}_v}({\theta _{{\rm{ZOD}}}}) = \left[ {\begin{array}{*{20}{c}}
1\\
{{e^{j2\pi \frac{{{D_v}\cos ({\theta _{{\rm{ZOD}}}})}}{{{\lambda _0}}}}}}\\
 \vdots \\
{{e^{j2\pi \frac{{({N_v} - 1){D_v}\cos ({\theta _{{\rm{ZOD}}}})}}{{{\lambda _0}}}}}}
\end{array}} \right].
\end{equation}
For notational simplicity, we temporarily assume the UE has one antenna. However the generalization from single-antenna UE to multiple-antenna UE is straightforward. In fact, the simulations in Sec. \ref{sec:numericalResult} are mostly carried out with multiple-antenna UEs.
Denote the number of subcarriers in UL and DL as $N_f^{\rm{(U)}}$ and $N_f^{\rm{(D)}}$ respectively.
Let $\vh^\text{(U)}(f, t) \in {\mathbb{C}^{N_t \times 1}}$ and $\vh^\text{(D)}(f, t) \in {\mathbb{C}^{N_t \times 1}}$ denote respectively the UL and DL channels between all base station antennas and the UE antenna at time $t$ and frequency $f$. We write the UL and DL channels at all subcarriers in a matrix form:
\begin{align}
{{\bf{H}}^{\text{(U)}}}(t) &\triangleq [\begin{array}{*{20}{c}}
{\vh^{\text{(U)}}(f_1, t)}&{\vh^{\text{(U)}}(f_2, t)}& \cdots &{{\vh^\text{(U)}(f_{N_f^{\rm{(U)}}}, t)}} \end{array}],\label{Eq:Hut} \\
{{\bf{H}}^{\text{(D)}}}(t) &\triangleq [\begin{array}{*{20}{c}}
{\vh^{\text{(D)}}(f_1, t)}&{\vh^{\text{(D)}}(f_2, t)}& \cdots &{{\vh^\text{(D)}(f_{N_f^{\rm{(D)}}}, t)}}
\end{array}], \label{Eq:Hdt}
\end{align}
where $f_i$ is the frequency of the $i$-th subcarrier of either UL or DL channel.

Depending on the context, a superscript $\text{X} = \{ \text{U}, \text{D}\}$ is introduced to simplify the presentation. According to the multipath model in Eq. (\ref{Eq:channelModel}), we may further write
\begin{equation}\label{Eq:HxtACB}
{{\bf{H}}^{(\text{X})}}(t) = {\bf{A}_\text{blk}^{(\text{X})}}{{\bf{C}}^{(\text{X})}}(t){{\bf{B}}^{(\text{X})}},
\end{equation}
where ${\bf{A}^{(\text{X})}_\text{blk}} = {\mathop{\rm diag}\nolimits} \{ {\bf{A}^{(\text{X})},...,\bf{A}^{(\text{X})}}\} \in {\mathbb{C}^{N_t \times N_p M}} $ is a block diagonal matrix with $\mA^{(\text{X})} \in {\mathbb{C}^{N_h N_v \times M}}$ composed of $M$ 3D steering vectors:
\begin{small}
\begin{equation}
{\bf{A}}^{\rm{(X)}} \buildrel \Delta \over = \left[ {\begin{array}{*{20}{c}}
{{\bf{a}}^{\rm{(X)}}({\theta _{1,{\rm{ZOD}}}},{\phi _{1,{\rm{AOD}}}})}& \cdots &{{\bf{a}}^{\rm{(X)}}({\theta _{M,{\rm{ZOD}}}},{\phi _{M,{\rm{AOD}}}})}
\end{array}} \right],
\end{equation}
\end{small}
and
\begin{equation}
{\bf{B}}^{\rm{(X)}} \buildrel \Delta \over = \left[ {\begin{array}{*{20}{c}}
{{\bf{b}}^{\rm{(X)}}({\tau _1})}&{{\bf{b}}^{\rm{(X)}}({\tau _2})}& \cdots &{{\bf{b}}^{\rm{(X)}}({\tau _M})}
\end{array}} \right]^T,
\end{equation}
with ${{\bf{b}}^{\rm{(X)}}({\tau _m})}$, $(m = 1, \cdots, M)$ being the delay response vector of the $m$-th path, which is defined as
\begin{equation}\label{Eq:btao}
\mbox{\small$\displaystyle
{\bf{b}}^{\rm{(X)}}({\tau _m}) = \left[ \begin{array}{*{20}{c}}
{{e^{ -j2\pi {f_1}{\tau _m}}}}&{{e^{ -j2\pi {f_2}{\tau _m}}}}& \cdots &{{e^{ -j2\pi {f{{N_f^{\text{(X)}}}}{\tau _m}}}}}
\end{array} \right]^T$}.
\end{equation}
The matrix ${\bf{C}}^{(\text{X})}(t) \in \mathbb{C}^{N_p M \times M}$ is defined as
\begin{align}\label{Eq:Cxt}
{{\bf{C}}^{({\rm{X}})}}(t) = {\left[ {\begin{array}{*{20}{c}}
{{{\bf{C}}_1}^{({\rm{X}})}(t)}& \cdots &{{{\bf{C}}_{{N_p}}}^{({\rm{X}})}(t)}
\end{array}} \right]^T},
\end{align}
where ${{\bf{C}}_i}^{({\rm{X}})}(t), i = 1, \cdots, N_p$ is a diagonal matrix
\begin{align}\label{Eq:Cixt}
{{{\bf{C}}_i}^{({\rm{X}})}(t)} = {\mathop{\rm diag}\nolimits} \{ {c^{(\text{X})}_{i, 1}(t),...,{c^{(\text{X})}_{i, M}}(t)}\},
\end{align}
with its $m$-th diagonal element defined as
\begin{align}\label{Eq:cimt}
&c^{(\text{X})}_{i, m}(t) = {\left[ {\begin{array}{*{20}{c}}
{{F_{{\rm{rx,}}\theta }}\left( {{\theta _{m{\rm{,ZOA}}}},{\varphi _{m{\rm{,AOA}}}}} \right)}\\
{{F_{{\rm{rx,}}\varphi }}\left( {{\theta _{m{\rm{,ZOA}}}},{\varphi _{m{\rm{,AOA}}}}} \right)}
\end{array}} \right]^T} \nonumber \\
&\left[ {\begin{array}{*{20}{c}}
{\exp \left( {j\Phi _m^{\theta \theta }} \right)}&{\sqrt {\kappa _m^{ - 1}} \exp \left( {j\Phi _m^{\theta \varphi }} \right)}\\
{\sqrt {{\kappa _m}^{ - 1}} \exp \left( {j\Phi _m^{\varphi \theta }} \right)}&{\exp \left( {j\Phi _m^{\varphi \varphi }} \right)}
\end{array}} \right] \nonumber \\
&\left[ {\begin{array}{*{20}{c}}
{{F_{{\rm{tx}},\theta }}\left( {{\theta _{m{\rm{,ZOD}}}},{\varphi _{m{\rm{,AOD}}}}} \right)}\\
{{F_{{\rm{tx}},\varphi }}\left( {{\theta _{m{\rm{,ZOD}}}},{\varphi _{m{\rm{,AOD}}}}} \right)}
\end{array}} \right]\exp \left( {j2\pi \frac{{\hat r_{{\rm{rx,}}m}^T\bar v}}{{{\lambda _0}}}t} \right).
\end{align}
The vectorized channel representation of Eq. (\ref{Eq:HxtACB}) is written as
\begin{align}\label{Eq:uhxt}
{\underline{\bf{h}}^{(\text{X})}}(t) \triangleq  \operatorname{vec} \left({{\bf{H}}^{(\text{X})}}(t)\right) = {{\bf{B}}^{(\text{X})T}} \otimes {\bf{A}_\text{blk}^{(\text{X})}}\operatorname{vec} \left( {{\bf{C}}^{(\text{X})}}(t)\right).
\end{align}

Denote the channel covariance matrix in spatial domain and frequency domain as
\begin{align}\label{Eq:Rxs}
{{\bf{R}}^{{\rm{(X,S)}}}} = \mathbb{E} \left\{ {{\bf{H}}^{\text{(X)}}}(t){{\left( {{\bf{H}}^{\text{(X)}}}(t) \right)}^H} \right\},
\end{align}
and
\begin{align}\label{Eq:Rxf}
{{\bf{R}}^{{\rm{(X,F)}}}} = \mathbb{E} \left\{ {{\left( {{\bf{H}}^{\text{(X)}}}(t) \right)}^T} \left({{\bf{H}}^{\text{(X)}}}(t)\right)^* \right\}.
\end{align}
respectively, where the expectation is taken over time.
We also denote the joint spatial-frequency channel covariance matrix as
\begin{align}\label{Eq:RxJ}
{{\bf{R}}^{{\rm{(X,J)}}}} = \mathbb{E} \left\{ {{\left( {\underline{\bf{h}}^{\text{(X)}}}(t) \right)}} \left({\underline{\bf{h}}^{\text{(X)}}}(t)\right)^H \right\}.
\end{align}
Note that the base station may obtain the DL channel covariance matrix by some existing methods, e.g., the projection method in a Hilbert space by exploiting the angular reciprocity of UL and DL channels \cite{miretti2018}.

\section{The Rank of the Covariance Matrix}\label{sec:covaRank}

In this section, we investigate the ranks of the channel covariance matrices of a UPA, for the cases of 1) the narrow-band spatial covariance matrix, 2) the frequency covariance matrix, and 3) the wideband joint spatial-frequency covariance matrix.
we demonstrate that the channel covariance matrix of a large-scale UPA with antenna spacing no larger than half wavelength always exhibits a low-rankness property no matter what the angular distribution is. This property shows that the spatial channel vector lives in a reduced space and thus can be compressed using the second-order channel statistics.
For ease of exposition, we derive the ranks for a single polarization. The results can be readily generalized to cross-polarization setting by simply multiplying the expressions of the ranks by 2.
We start with a simple setting where the multipath angles are randomly distributed within a closed region. For notational simplicity, we drop the subscripts ``ZOD" and ``AOD"  and do not specify UL or DL in this section. The Zenith angle of departure $\theta$ and the Azimuth angle of departure $\varphi$ are distributed with an arbitrary probability density function that is  non-zero inside the bounded and convex support $\zeta(\theta, \varphi)$. The boundary is defined as follows: $\theta \in [\theta^{\text{min}}, \theta^{\text{max}}]$, and for a given Zenith angle $\theta$, the bound of the Azimuth angle is $[\varphi^{\text{min}}(\theta), \varphi^{\text{max}}(\theta)]$. Consider a certain carrier frequency with wavelength $\lambda$. The rank of the spatial covariance matrix ${{\bf{R}}^{{\rm{(S)}}}}$, is characterized in Theorem \ref{theoSpatialRank} for massive MIMO regime.

\begin{theorem}\label{theoSpatialRank}
When $N_h$ and $N_v$ are large, the rank of the spatial channel covariance matrix ${{\bf{R}}^{{\rm{(S)}}}}$ is given by
\begin{equation}\label{Eq:theoRank}
\frac{\text{rank}\{{{\bf{R}}^{{\rm{(S)}}}} \}}{N_h N_v} = \rho^{{\rm{(S)}}},
\end{equation}
where the ratio $\rho^{{\rm{(S)}}}$ is defined as
\begin{align}\label{Eq:rhoS}
\rho^{{\rm{(S)}}} \defi \frac{{{D_h}{D_v}}}{{{\lambda ^2}}}\int\limits_{{\theta ^{\min }}}^{{\theta ^{\max }}} {{{\sin }^2}(\theta )\left( {\sin \left( {{\varphi ^{\max }}(\theta )} \right) - \sin \left( {{\varphi ^{\min }}(\theta )} \right)} \right)d\theta }.
\end{align}
\end{theorem}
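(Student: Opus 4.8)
The plan is to reduce the rank computation to an asymptotic eigenvalue-counting problem for a structured covariance matrix and then to evaluate a two-dimensional area through a change of variables. First I would average over time, using that the initial phases $\Phi_m^{\theta\theta},\dots$ are independent and uniform so that the cross terms between distinct paths vanish, and pass to the continuous angular power density $\gamma(\theta,\varphi)$ supported on $\zeta$. This lets me write
\[
\mathbf{R}^{(\mathrm{S})} = \int\!\!\int_{\zeta} \gamma(\theta,\varphi)\, \mathbf{a}(\theta,\varphi)\,\mathbf{a}(\theta,\varphi)^H \, d\theta\, d\varphi,
\]
with $\gamma>0$ on $\zeta$. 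Because the rank is insensitive to the positive weight $\gamma$, the quantity $\mathrm{rank}\{\mathbf{R}^{(\mathrm{S})}\}$ equals the dimension of the linear span of the steering-vector family $\{\mathbf{a}(\theta,\varphi):(\theta,\varphi)\in\zeta\}$.

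Next I would introduce the normalized spatial frequencies $u=(D_h/\lambda)\sin\theta\sin\varphi$ and $w=(D_v/\lambda)\cos\theta$, so that by the Kronecker structure $\mathbf{a}(\theta,\varphi)=\mathbf{a}_h\otimes\mathbf{a}_v$ becomes the sampled two-dimensional complex exponential $[e^{j2\pi k u}]_{k=0}^{N_h-1}\otimes[e^{j2\pi l w}]_{l=0}^{N_v-1}$. The half-wavelength constraint $D_h,D_v\le\lambda/2$ forces $(u,w)\in[-\tfrac12,\tfrac12]^2$, which guarantees that distinct frequencies are not aliased onto the same exponential. The core of the argument is then a two-dimensional Szeg\H{o} / Landau-type theorem: for large $N_h,N_v$ the matrix $\mathbf{R}^{(\mathrm{S})}$ is asymptotically block-Toeplitz-with-Toeplitz-blocks, its symbol being supported on the image region $\Omega$ of $\zeta$ under $(\theta,\varphi)\mapsto(u,w)$, so the fraction of its eigenvalues bounded away from zero converges to the Lebesgue measure of $\Omega$, giving $\mathrm{rank}\{\mathbf{R}^{(\mathrm{S})}\}/(N_hN_v)\to\mathrm{Area}(\Omega)$.

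Finally I would compute $\mathrm{Area}(\Omega)$ by the same change of variables. A direct differentiation gives the Jacobian
\[
\left|\frac{\partial(u,w)}{\partial(\theta,\varphi)}\right| = \frac{D_hD_v}{\lambda^2}\,\sin^2\theta\,\lvert\cos\varphi\rvert,
\]
and integrating first over $\varphi\in[\varphi^{\min}(\theta),\varphi^{\max}(\theta)]$, where $\int\cos\varphi\,d\varphi = \sin\varphi^{\max}(\theta)-\sin\varphi^{\min}(\theta)$, and then over $\theta\in[\theta^{\min},\theta^{\max}]$, reproduces exactly the expression $\rho^{(\mathrm{S})}$, provided the map is injective on $\zeta$ so that $\int_\zeta\lvert J\rvert$ coincides with the geometric area of the image.

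The main obstacle I anticipate is twofold. The hard analytic step is the rigorous asymptotic eigenvalue count: ``rank'' must be read as the limiting proportion of eigenvalues bounded away from zero, in the spirit of Landau--Pollak--Slepian concentration, and one must control the spectral transition band whose relative width must be shown to vanish as the array grows. The secondary technical point is establishing injectivity of $(\theta,\varphi)\mapsto(u,w)$ on the convex support $\zeta$, so that $\cos\varphi$ keeps a fixed sign and $\Omega$ does not self-overlap; this is precisely what allows the signed Jacobian integral to be interpreted as the true area $\mathrm{Area}(\Omega)$, and it is the role played by the convexity of $\zeta$ together with the bounded-spacing hypothesis.
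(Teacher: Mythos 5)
Your proposal is correct and follows essentially the same route as the paper: both reduce the asymptotic rank to the area of the angular support in the normalized spatial-frequency coordinates $\bigl((D_h/\lambda)\sin\theta\sin\varphi,\,(D_v/\lambda)\cos\theta\bigr)$ and evaluate that area with exactly the Jacobian computation you give, arriving at $\rho^{(\mathrm{S})}$. The only presentational difference lies in the eigenvalue-counting step: where you invoke a two-dimensional Szeg\H{o}/Landau-type theorem and rightly flag it as the hard part, the paper instead proves an elementary lemma that steering vectors from disjoint angular supports become asymptotically orthogonal (via vanishing Dirichlet-kernel sums) so that ranks add over a partition of the support into small cells --- the spectral transition band and the injectivity caveat you raise are left implicit there as well.
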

\begin{proof}
\quad \emph{Proof:} The proof can be found in Appendix \ref{proof:theoSpatialRank}.
\end{proof}

For the special case where the ZOD and AOD are mutually independent and distributed within their own intervals, i.e., $\theta \in [\theta^{\text{min}}, \theta^{\text{max}}]$, $\varphi \in [\varphi^{\text{min}}, \varphi^{\text{max}}]$, as shown in Fig. \ref{fig:AngleRange}, we may readily obtain the rank ratio of the spatial covariance matrix:
\begin{align}\label{Eq:rhoSInd}
\rho^{{\rm{(S)}}} &= \frac{{{D_h}{D_v}}}{{{\lambda ^2}}}\left( {\sin ({\varphi ^{\max }}) - \sin ({\varphi ^{\min }})} \right) \nonumber \\
& \left( {\frac{1}{2}({\theta ^{\max }} - {\theta ^{\min }}) - \frac{1}{4}\left( {\sin (2{\theta ^{\max }}) - \sin (2{\theta ^{\min }})} \right)} \right).
\end{align}

\begin{figure}[h]
  \centering
  \includegraphics[width=3.0in]{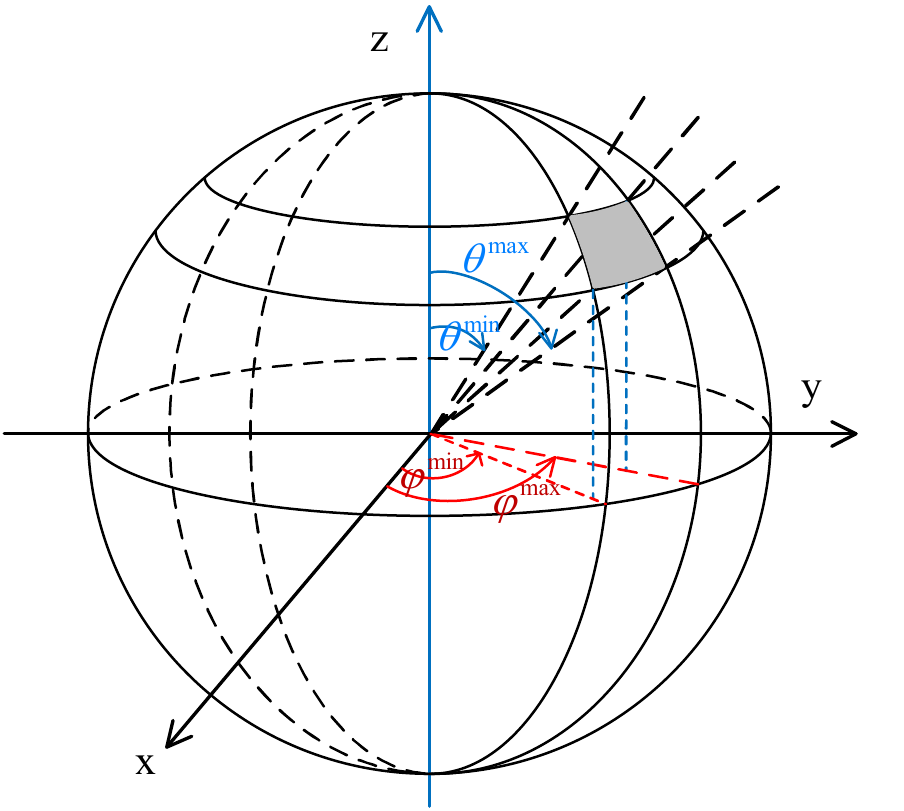}\\
  \caption{An example of the bounded angular support.} \label{fig:AngleRange}
\end{figure}

Furthermore, if the ZOD and AOD spread the full angular range, e.g., $\theta$ and $\varphi$ are distributed with non-zero probability in all directions, then the rank ratio of the spatial channel covariance matrix is upper-bounded by
\begin{align}\label{Eq:rhoSAll}
\rho^{{\rm{(S)}}} &= \frac{{{D_h}{D_v}}}{{{\lambda ^2}}} \pi,
\end{align}
or equivalently
\begin{align}\label{Eq:rankRS}
{\text{rank}\{{{\bf{R}}^{{\rm{(S)}}}} \}} = L_h L_v \pi,
\end{align}
where $L_h$ and $L_v$ denote the normalized apertures of the UPA in horizontal and vertical directions respectively:
\begin{align}
L_h \defi D_h N_h/\lambda , \quad L_v \defi D_v N_v/\lambda.
\end{align}
Eq. (\ref{Eq:rankRS}) indicates that the rank of the spatial covariance matrix is determined by the area of the ellipse circumscribing the rectangle of size $L_v \times L_h$. In the practical cases that the antenna spacing is no larger than half-wavelength, such an observation means ${{\bf{R}}^{{\rm{(S)}}}}$ always have low-rankness property, and the rank ratio will never exceed $\pi/4$ under arbitrary angular distributions.

Interestingly, Eq. (\ref{Eq:rhoSAll}) coincides with the Degrees of Freedom (DoF) in \cite{Pizzo2020} for a UPA in an isotropic scattering environment. In fact, Theorem \ref{theoSpatialRank} is a generalized result of the DoF for the case that the angles have a certain bounded support instead of unbounded.

Note that Theorem \ref{theoSpatialRank} describes the dimensionality of the signal subspace for a simple setting that the multipath angles exists inside one bounded support. Corollary \ref{coro:RankDisjoint} is an extension to the case that the angles are distributed within several disjoint angular sub-supports, which is more realistic. Denote the number of disjoint sub-supports as $Q$, and the boundary of the $q$-th sub-support as
\begin{align}
\zeta_q \defi \left\{ (\theta, \varphi) | \theta \in [\theta_q^\text{min}, \theta_q^\text{max}], \varphi \in [\varphi_q^\text{min}(\theta), \varphi_q^\text{max}(\theta)]\right\}.
\end{align}
The sub-supports are mutually disjoint, i.e., $\zeta_i \cap \zeta_j = \emptyset, \forall i \neq j$. Consider a certain UE with the angular support $\zeta_u$ being the union of all $Q$ sub-supports, i.e., $\zeta_u = \zeta_1 \cup \zeta_2 \cup, \cdots, \cup \zeta_Q$, and we assume the angular power spectrum (APS) is non-zero over the support $\zeta_u$.
\begin{corollary}\label{coro:RankDisjoint}
When $N_h$ and $N_v$ are large, the rank of the spatial channel covariance matrix ${{\bf{R}}^{{\rm{(S)}}}}$ is given by
\begin{equation}
\frac{\text{rank}\{{{\bf{R}}^{{\rm{(S)}}}} \}}{N_h N_v} = \rho^{{\rm{(S)}}},
\end{equation}
where the ratio $\rho^{{\rm{(S)}}}$ is defined as
\begin{small}
\begin{align}
\rho^{{\rm{(S)}}} \defi \frac{{{D_h}{D_v}}}{{{\lambda ^2}}}\sum\limits_{q = 1}^Q {\int\limits_{\theta _q^{\min }}^{\theta _q^{\max }} {{{\sin }^2}(\theta )\left( {\sin \left( {\varphi _q^{\max }(\theta )} \right) - \sin \left( {\varphi _q^{\min }(\theta )} \right)} \right)d\theta } }.
\end{align}
\end{small}
\end{corollary}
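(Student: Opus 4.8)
\emph{Proof proposal.} The plan is to deduce the corollary from Theorem \ref{theoSpatialRank} by decomposing the covariance matrix across the $Q$ sub-supports and showing that the individual ranks add up in the large-array limit. Since the APS is supported on the disjoint union $\zeta_u = \zeta_1 \cup \cdots \cup \zeta_Q$, the spatial covariance splits additively as ${\bf R}^{\rm (S)} = \sum_{q=1}^{Q} {\bf R}^{\rm (S)}_q$, where ${\bf R}^{\rm (S)}_q$ collects the contribution of the paths whose departure angles fall inside $\zeta_q$. Each ${\bf R}^{\rm (S)}_q$ is positive semidefinite and its range is the span of the steering vectors ${\bf a}(\theta,\varphi)$ over $\zeta_q$. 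Because every $\zeta_q$ has exactly the bounded structure required by Theorem \ref{theoSpatialRank}, applying that theorem to each sub-support gives $\operatorname{rank}\{{\bf R}^{\rm (S)}_q\}/(N_h N_v) \to \rho^{\rm (S)}_q$, with
\[
\rho^{\rm (S)}_q = \frac{D_h D_v}{\lambda^2}\int_{\theta_q^{\min}}^{\theta_q^{\max}} \sin^2(\theta)\left(\sin(\varphi_q^{\max}(\theta)) - \sin(\varphi_q^{\min}(\theta))\right) d\theta .
\]

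It remains to prove that these ranks add. For a sum of positive semidefinite matrices one always has $\operatorname{rank}\{\sum_q {\bf R}^{\rm (S)}_q\} \le \sum_q \operatorname{rank}\{{\bf R}^{\rm (S)}_q\}$, which already yields the upper bound $\rho^{\rm (S)} \le \sum_q \rho^{\rm (S)}_q$. Equality holds precisely when the ranges of the ${\bf R}^{\rm (S)}_q$ are in direct sum, so the crux is to establish that the per-sub-support subspaces become asymptotically orthogonal as $N_h, N_v \to \infty$.

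Here I would exploit the wavenumber-domain picture underlying Theorem \ref{theoSpatialRank}. The steering vector depends on the angles only through the two normalized wavenumbers $u = \frac{D_h}{\lambda}\sin(\theta)\sin(\varphi)$ and $w = \frac{D_v}{\lambda}\cos(\theta)$; the change of variables $(\theta,\varphi)\mapsto(u,w)$ has Jacobian $\frac{D_h D_v}{\lambda^2}\sin^2(\theta)\cos(\varphi)$, which (after integrating $\cos(\varphi)$ over the azimuth interval) is exactly what turns $\rho^{\rm (S)}_q$ into the area of the image $\Psi(\zeta_q)$ of $\zeta_q$ in the $(u,w)$ plane. As the array grows, the inner products between steering vectors drawn from wavenumber-separated cells decay like sampled two-dimensional complex exponentials, so the subspaces associated with sub-supports whose wavenumber images are disjoint are asymptotically orthogonal; the ranks then add and the lower bound matches the upper bound.

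The main obstacle is that angular disjointness of the $\zeta_q$ does not by itself guarantee disjointness of their wavenumber images. Indeed $\Psi$ is two-to-one when the azimuth spans more than a half-plane, since $\varphi$ and $\pi-\varphi$ produce the same $u$ and the same $w$, so two angularly disjoint sub-supports could fold onto the same region of the $(u,w)$ plane, carry identical steering vectors, and break additivity. I would resolve this by invoking the standing sectorized-array setting, in which $\theta \in [0,\pi]$ and the azimuth is confined to a half-plane (e.g. $\varphi \in [-\pi/2, \pi/2]$); there $\Psi$ is injective, so $\Psi(\zeta_u) = \bigcup_q \Psi(\zeta_q)$ is a disjoint union and $\operatorname{Area}\{\Psi(\zeta_u)\} = \sum_q \operatorname{Area}\{\Psi(\zeta_q)\} = \sum_q \rho^{\rm (S)}_q$. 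Feeding this back into the area-counting argument of Theorem \ref{theoSpatialRank} delivers the stated expression for $\rho^{\rm (S)}$.
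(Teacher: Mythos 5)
Your proposal is correct and follows essentially the same route as the paper: the paper's own proof of Corollary~\ref{coro:RankDisjoint} is declared a ``simple extension'' of Appendix~\ref{proof:theoSpatialRank}, whose Lemma~\ref{lemma:additiveRank} establishes exactly the asymptotic orthogonality of the subspaces attached to disjoint angular sub-supports, and whose proof of Theorem~\ref{theoSpatialRank} already sums the per-sub-support area integrals in the $(\sin\varphi,\cos\theta)$ variables. Your additional care about the two-to-one folding $\varphi \mapsto \pi - \varphi$ of the wavenumber map is a genuine subtlety that the paper's Lemma~\ref{lemma:additiveRank} silently assumes away (its claim that the terms $\sin(\theta)\sin(\varphi)-\sin(\theta_2)\sin(\varphi_2)$ and $\cos(\theta)-\cos(\theta_2)$ never vanish simultaneously for angles outside $\zeta_1$ can fail without a half-plane restriction on the azimuth), so your sectorized-array caveat tightens the argument rather than departing from it.
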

\begin{proof}
\quad \emph{Proof:} The proof is a simple extension of Appendix \ref{proof:theoSpatialRank} and thus omitted.
\end{proof}

Theorem \ref{theoSpatialRank} and Corollary \ref{coro:RankDisjoint} show that the spatial covariance matrix has a reduced subspace and only a limited number of eigenvectors have corresponding non-negligible eigenvalues. As a result, the channel matrix is compressible using the dominant eigenvectors of ${\bf{R}}^{{\rm{(S)}}}$.

We then show the frequency covariance matrix has a similar low-rankness property, which can be exploited to reduce the dimension of the channel in frequency domain. Suppose the power delay profile (PDP) is strictly non-zero inside $Q$ delay intervals $\xi_1, \xi_2, \cdots, \xi_Q$. The range of the $q$-th interval is $[\tau_q^\text{min}, \tau_q^\text{max}]$, $q = 1, \cdots, Q$. Denote the subcarrier spacing as $\Delta f$ and the number of subcarriers as $N_f$. Proposition \ref{prop:RankFreq} describes the rank of the corresponding spatial covariance matrix $\mR^\text{(F)}$.
\begin{proposition}\label{prop:RankFreq}
When $N_f$ is large, the rank of the frequency channel covariance matrix $\mR^\text{(F)}$ is given by
\begin{equation}
\frac{\text{rank}\{\mR^\text{(F)}\}}{N_f}= \rho^{{\rm{(F)}}},
\end{equation}
where the ratio $\rho^{{\rm{(F)}}}$ is defined as
\begin{align}
\rho^{{\rm{(F)}}} \defi \min \left\{ \sum\limits_{q = 1}^Q \Delta f {\left( {\tau _q^{\max } - \tau _q^{\min }} \right)}, 1 \right\}.
\end{align}
\end{proposition}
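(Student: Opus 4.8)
The plan is to reduce $\mR^\text{(F)}$ to a Hermitian Toeplitz matrix acting in the delay domain and then transfer the asymptotic spectral argument behind Theorem~\ref{theoSpatialRank} from the spatial (angle) variable to the frequency (delay) variable. First I would substitute the factorization ${\bf{H}}(t) = {\bf{A}}_\text{blk}{\bf{C}}(t){\bf{B}}$ into \eqref{Eq:Rxf}, so that $\mR^\text{(F)} = {\bf{B}}^T\,\mathbb{E}\{{\bf{C}}(t)^T{\bf{A}}_\text{blk}^T{\bf{A}}_\text{blk}^*{\bf{C}}(t)^*\}\,{\bf{B}}^*$, where only ${\bf{C}}(t)$ carries the random initial phases and the Doppler. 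Because the phases $\{\Phi_m^{\theta\theta},\dots\}$ are independent and uniform across paths, the inner $M\times M$ expectation is diagonal and the cross terms ($m\neq m'$) vanish. This yields $\mR^\text{(F)} = \sum_m d_m\,\mathbf{b}(\tau_m)\mathbf{b}(\tau_m)^H$ with $d_m>0$ proportional to the power of path $m$ times $\|\mathbf{a}(\theta_m,\varphi_m)\|_2^2 = N_h N_v$. Hence $\text{rank}\{\mR^\text{(F)}\}$ equals the dimension of the span of the delay response vectors $\{\mathbf{b}(\tau):\tau\in\bigcup_q[\tau_q^\text{min},\tau_q^\text{max}]\}$, exactly mirroring the role played by the steering vectors in Theorem~\ref{theoSpatialRank}.

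Next I would exploit the uniform subcarrier grid $f_i = f_1 + (i-1)\Delta f$. Then $[\mR^\text{(F)}]_{ij}$ depends only on $i-j$ through $\sum_m d_m e^{-\j 2\pi (i-j)\Delta f\,\tau_m}$, i.e. $\mR^\text{(F)}$ is a positive semidefinite Hermitian Toeplitz matrix, the one-dimensional counterpart of the (block-)Toeplitz spatial covariance of Theorem~\ref{theoSpatialRank}. Its generating sequence is $r_k = \int S(\tau)\,e^{-\j 2\pi k\Delta f\,\tau}\,d\tau$, where $S(\tau)$ is the PDP supported on the $Q$ intervals. By the classical asymptotic spectral distribution theory for Hermitian Toeplitz matrices (Szeg\H{o}'s theorem), as $N_f\to\infty$ the empirical eigenvalue distribution of $\mR^\text{(F)}$ converges to that of the symbol $\Phi(\nu) = \sum_k r_k e^{\j 2\pi k\nu}$ on $\nu\in[0,1)$, and the fraction of eigenvalues bounded away from zero converges to the Lebesgue measure of $\{\nu:\Phi(\nu)>0\}$.

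It then remains to compute that measure. Using the Dirac-comb identity one finds $\Phi(\nu) = \frac{1}{\Delta f}\sum_n S\!\left(\frac{\nu-n}{\Delta f}\right)$, so $\Phi$ is the $1$-periodization of the scaled PDP and $\Phi(\nu)>0$ precisely on the image of the delay support under $\tau\mapsto\Delta f\,\tau \bmod 1$. In the physically relevant regime where the excess delays lie within one symbol period $1/\Delta f$, this map is injective and the $Q$ disjoint intervals map to disjoint intervals of total length $\Delta f\sum_q(\tau_q^\text{max}-\tau_q^\text{min})$; once this quantity reaches $1$ the folded images cover all of $[0,1)$ and the measure saturates at $1$. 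Dividing by $N_f$ gives $\rho^\text{(F)} = \min\{\sum_q\Delta f(\tau_q^\text{max}-\tau_q^\text{min}),\,1\}$, consistent with the trivial bound $\text{rank}\{\mR^\text{(F)}\}\le N_f$.

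The main obstacle is the one also hidden in Theorem~\ref{theoSpatialRank}: converting the \emph{distributional} conclusion of Szeg\H{o}'s theorem into a statement about the \emph{rank}. One must show that the spectrum exhibits a sharp dichotomy, with eigenvalues either bounded away from $0$ or negligibly small and only $o(N_f)$ eigenvalues in the transition band, so that the normalized rank actually converges to the symbol-support measure. I would establish this via the concentration (prolate-spheroidal / Slepian) behavior of band-limited Toeplitz kernels, or by directly reusing the eigenvalue-clustering estimate from Appendix~\ref{proof:theoSpatialRank} specialized to one dimension, which guarantees that the transition region has vanishing relative width and hence does not affect the leading-order rank ratio.
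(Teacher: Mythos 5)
Your proposal is correct, and it supplies considerably more detail than the paper does: the paper's entire ``proof'' of Proposition~\ref{prop:RankFreq} is a pointer to Lemma~2 of \cite{yin:13}, observing that the delay response vector ${\bf{b}}(\tau)$ on a uniform subcarrier grid is formally a ULA steering vector with the normalized spatial frequency $D\sin(\theta)/\lambda$ replaced by $\Delta f\,\tau$, so the one-dimensional rank result carries over verbatim. Your first step (killing the cross terms via the independent uniform phases to get $\mR^{\text{(F)}}=\sum_m d_m{\bf{b}}(\tau_m){\bf{b}}(\tau_m)^H$) and your identification of the Toeplitz structure are exactly the ``modifications'' the paper alludes to. Where you diverge is in the asymptotic machinery: the paper's own template (Lemma~\ref{lemma:additiveRank} in Appendix~\ref{proof:theoSpatialRank}) proves low rank by showing that response vectors with parameters outside the support are asymptotically annihilated by the covariance, and then adds up the dimensions of disjoint sub-supports; you instead invoke Szeg\H{o}'s theorem for the Hermitian Toeplitz symbol and measure its essential support. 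Both routes land on the same number; the Szeg\H{o} route is the more classical one for frequency-domain covariances and makes the aliasing explicit, while the paper's route is more elementary and generalizes directly to the 2-D and 3-D cases of Theorems~\ref{theoSpatialRank} and~\ref{theoJointRank}. Two remarks to your credit: you correctly flag that converting the distributional conclusion of Szeg\H{o}'s theorem into a statement about rank requires an eigenvalue-dichotomy (Slepian-type concentration) argument, a gap that is equally present but unacknowledged in the paper's treatment of Theorem~\ref{theoSpatialRank}; and you correctly note that the stated formula with the $\min\{\cdot,1\}$ is exact only when the map $\tau\mapsto\Delta f\,\tau\bmod 1$ is injective on the delay support (e.g.\ under the condition $\tau^{\max}\le 1/\Delta f$ that the paper only imposes later, before Theorem~\ref{theoJointRank}), since otherwise folded images of distinct sub-supports can overlap and the normalized rank falls strictly below $\sum_q\Delta f(\tau_q^{\max}-\tau_q^{\min})$ without reaching $1$.
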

\begin{proof}
\quad \emph{Proof:} The proof is readily obtained by applying some modifications of Lemma 2 in \cite{yin:13}, which was originally dedicated to the rank of the spatial covariance matrix for large-scale ULA. Thus we skip the detailed proof here.
\end{proof}

Proposition \ref{prop:RankFreq} indicates that as the bandwidth increases, the frequency covariance matrix is also rank-deficient due to the limited delay distributions. As a result, the channel matrix can be compressed in frequency domain using the dominant eigenvectors of the covariance matrix $\mR^\text{(F)}$.

We future study the rank of the joint spatial-frequency covariance matrix $\mR^\text{(J)}$. Denoting now the number of angle-delay sub-supports as $Q$, the boundary of the $q$-th angle-delay sub-support is
\begin{align}\label{Eq:etaq}
\eta_q &\defi \left\{ (\theta, \varphi, \tau) | \theta \in [\theta_q^\text{min}, \theta_q^\text{max}], \varphi \in [\varphi_q^\text{min}(\theta), \varphi_q^\text{max}(\theta)], \right. \nonumber \\
& \quad \left.\tau \in [\tau_q^\text{min}(\theta, \varphi), \tau_q^\text{max}(\theta, \varphi)]\right\}.
\end{align}
The union of all the sub-supports is defined as
\begin{align}\label{Eq:etau}
\eta_u = \eta_1 \cup \eta_2 \cup, \cdots, \cup \eta_Q.
\end{align}
The probability density function of the multipath is uniformly non-zero within each sub-support, which is assumed convex for ease of exposition. We also make an assumption that the longest delay $\tau^\text{max}$ satisfies $\tau^\text{max} \leq 1/{\Delta f}$. Taking a 15 kHz subcarrier spacing for example, it means the paths are not longer than 20 km, which is reasonable. The rank of $\mR^\text{(J)}$ is shown in Theorem \ref{theoJointRank}.
\begin{theorem}\label{theoJointRank}
When $N_h$, $N_v$, and $N_f$ are large, the rank of the joint spatial-frequency channel covariance matrix ${{\bf{R}}^{{\rm{(J)}}}}$ is given by
\begin{equation}\label{Eq:theoJointRank}
\frac{\text{rank}\{{{\bf{R}}^{{\rm{(J)}}}} \}}{N_h N_v N_f} = \rho^{{\rm{(J)}}},
\end{equation}
where the ratio $\rho^{{\rm{(J)}}}$ is defined as
\begin{align}\label{Eq:rhoJ}
{\rho ^{({\rm{J}})}}&= \frac{{{D_h}{D_v}\Delta f}}{{{\lambda ^2}}} \sum\limits_{q = 1}^Q \int\limits_{\theta _q^{\min }}^{\theta _q^{\max }} \int\limits_{\varphi _q^{\min }(\theta )}^{\varphi _q^{\max }(\theta )} \left\{ {\sin }^2(\theta )\cos (\varphi ) \right. \nonumber \\
& \quad \left. \left( {{\tau_q ^{\max }}(\theta ,\varphi ) - {\tau_q ^{\min }}(\theta ,\varphi )} \right) \right\} d\varphi d\theta .
\end{align}
\end{theorem}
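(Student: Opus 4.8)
The plan is to identify $\text{rank}\{\mathbf{R}^{(\text{J})}\}$ with the dimension of the signal subspace spanned by the joint spatial--frequency steering vectors, and then to evaluate that dimension as a volume in a normalized frequency domain, exactly as in the proof of Theorem~\ref{theoSpatialRank}. From the Kronecker structure of the vectorized channel in \eqref{Eq:uhxt}, a single path with parameters $(\theta,\varphi,\tau)$ contributes (up to the polarization factor noted in the text, which only doubles the rank) a steering vector $\mathbf{b}(\tau)\otimes\mathbf{a}_h(\theta,\varphi)\otimes\mathbf{a}_v(\theta)$ whose $(n_v,n_h,k)$-th entry is a sampled three-dimensional complex exponential. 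I would introduce the normalized coordinates
\[
u_v=\frac{D_v}{\lambda}\cos\theta,\qquad u_h=\frac{D_h}{\lambda}\sin\theta\sin\varphi,\qquad u_\tau=\Delta f\,\tau,
\]
so that each entry equals $e^{\,j2\pi(n_v u_v+n_h u_h-(k-1)u_\tau)}$ up to a per-path constant phase that does not affect the rank. The hypotheses $\tau^{\max}\le 1/\Delta f$ and half-wavelength spacing ensure $(u_v,u_h,u_\tau)$ remains inside a single fundamental cell, so no aliasing folds distinct paths together.

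Next I would apply the asymptotic-rank argument that underlies Theorem~\ref{theoSpatialRank} and Proposition~\ref{prop:RankFreq} (a multidimensional version of the Slepian--Landau concentration result used for the ULA in \cite{yin:13}). As $N_h,N_v,N_f\to\infty$, the family of exponentials indexed by $(u_v,u_h,u_\tau)$ ranging over the image region $\Omega$ of $\eta_u$ spans a subspace whose dimension, divided by $N_hN_vN_f$, converges to the Lebesgue measure of $\Omega$. Hence $\text{rank}\{\mathbf{R}^{(\text{J})}\}/(N_hN_vN_f)\to\operatorname{vol}(\Omega)$, and it remains only to compute $\operatorname{vol}(\Omega)$.

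I would finish with a change of variables. The angular map $(\theta,\varphi)\mapsto(u_v,u_h)$ has Jacobian determinant $-\frac{D_hD_v}{\lambda^2}\sin^2\theta\cos\varphi$, while $du_\tau=\Delta f\,d\tau$, giving
\[
du_v\,du_h\,du_\tau=\frac{D_hD_v\Delta f}{\lambda^2}\,\sin^2\theta\cos\varphi\;d\theta\,d\varphi\,d\tau.
\]
Integrating over each sub-support $\eta_q$ and summing over $q$, the innermost $\tau$-integration contributes the factor $\tau_q^{\max}(\theta,\varphi)-\tau_q^{\min}(\theta,\varphi)$, reproducing precisely the expression \eqref{Eq:rhoJ} for $\rho^{(\text{J})}$.

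The Jacobian evaluation is routine. The main obstacle is the middle step: justifying that the normalized rank equals $\operatorname{vol}(\Omega)$ demands the multidimensional concentration theorem together with injectivity of the coordinate map on $\eta_u$, so that the image volume equals $\int_{\eta_u}|J|$ rather than over-counting folded regions. Because $u_v$ is strictly monotone in $\theta$ and, for fixed $\theta$, $u_h$ is strictly monotone in $\varphi$ over the physical range, this injectivity holds off the measure-zero set where $\sin\theta\cos\varphi=0$, which suffices.
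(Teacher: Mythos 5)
Your proposal is correct and follows essentially the same route as the paper's own proof: reduce to a single angle--delay sub-support via the additive-dimensionality argument of Lemma~\ref{lemma:additiveRank}, identify the normalized rank with the volume of the support's image in the normalized coordinates $\bigl(\tfrac{D_v}{\lambda}\cos\theta,\ \tfrac{D_h}{\lambda}\sin\theta\sin\varphi,\ \Delta f\,\tau\bigr)$, and evaluate that volume by the change of variables whose Jacobian yields $\tfrac{D_hD_v\Delta f}{\lambda^2}\sin^2\theta\cos\varphi$. You are in fact somewhat more explicit than the paper about the concentration step and the injectivity of the coordinate map, which the paper leaves implicit in its triple integral.
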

\begin{proof}
\quad \emph{Proof:} The proof can be found in Appendix \ref{proof:theoJointRank}.
\end{proof}

Note that the rank expressions for the spatial and frequency covariance matrices may serve as an upper-bound. In practice, the multipath do not have strictly non-zero APS or PDP over one or several intervals. In some settings, the angles and delays are modeled as discrete values that are constant over a relatively long time period, as in the clustered delay line (CDL) model of 3GPP \cite{3gpp:38.901} for example. In such cases, the rank ratio ${\rho ^{({\rm{S}})}}$, ${\rho ^{({\rm{F}})}}$, and ${\rho ^{({\rm{J}})}}$ converge to zero for any finite number of multipaths.

\section{A Partial Channel Reciprocity-based Codebook}\label{sec:PCR}
In this section, we show the details of our proposed partial channel reciprocity-based codebook.
Since the multipath angle-delay distribution of the UL and DL are reciprocal \cite{3gpp:36.897} \cite{Zhong2020}, there exists a partial reciprocity in terms of the signal subspace for UL and DL wideband channel covariance matrices.
Our method makes use of such a partial reciprocity and the low-dimensional structure of the statistical channel information in both spatial and frequency domains. The DL channel sounding is designed based on the prior knowledge of the channel covariance matrices in such a way that only the signal space corresponding to the non-negligible eigenvectors of the joint spatial-frequency covariance matrices is captured and measured by the UE, while the null space of the covariance matrices is automatically ignored. This will help reduce the feedback overhead and the complexity at the UE side.
More specifically, the training signal is sent according to the dominant eigenvectors of the covariance matrices.

For ease of exposition, we drop the superscript $\text{(U)}$ or $\text{(D)}$ and consider the downlink by default, unless otherwise notified.
Denote the eigen-value decomposition of the joint spatial-frequency channel covariance matrices as
\begin{align}
{{\bf{R}}^{{\rm{(J)}}}} &= {\mU}^{{\rm{(J)}}} {\mathbf{\Sigma}^{{\rm{(J)}}}} {\mU}^{{{\rm{(J)}}}H}, \label{Eq:Rj_EVD}
\end{align}
where ${\mU}^{{\rm{(J)}}}$  contains the eigen-vectors of ${{\bf{R}}^{{\rm{(J)}}}}$:
\begin{align}\label{Eq:EigsJ}
{\mU}^{{\rm{(J)}}} &= \left[{\vu}_1^{{\rm{(J)}}}, {\vu}_2^{{\rm{(J)}}}, \cdots, {\vu}_{N_t N_f}^{{\rm{(J)}}}\right].
\end{align}
${\mathbf{\Sigma}^{{\rm{(J)}}}}$ is a diagonal matrix with its diagonal entries (arranged in non-increasing order) being the eigenvalues of ${{\bf{R}}^{{\rm{(J)}}}}$. As shown in Sec. \ref{sec:covaRank}, ${{\bf{R}}^{{\rm{(J)}}}}$ has a low-rankness property and only a small number of its eigenvalues are significant, while the rest are negligible. We exploit this property and propose the joint spatial-frequency precoder for CSI-RS, which is aligned with the dominant eigenvectors of the covariance matrix ${{\bf{R}}^{{\rm{(J)}}}}$. Denote the total number of antenna ports for CSI-RS by $N_a$. The joint spatial-frequency precoder $\vw_n \in {\mathbb{C}^{N_t N_f \times 1}}$ for port $n$ of CSI-RS is
\begin{align}\label{Eq:wn}
    \vw_n \defi \left({\vu}_n^{\rm{(J)}}\right)^*, n = 1, 2, \cdots, N_a.
\end{align}
Note that the antenna port here is a generalized concept of the 5G \cite{dahlman2020}. The  antenna port $n$ ($n = 1, \cdots, N_a$) for CSI-RS corresponds to the specific reference signals with the $n$-th spatial-frequency precoder.
In order to better illustrate the idea of our method, we show the joint operation of the base station and the UE step by step. We introduce here the subband concept, which means a group of consecutive subcarriers in frequency domain. Depending on implementation, the bandwidth of a subband is equal to the width of one or several Resource Blocks (RBs), and it is smaller than the coherence bandwidth. In other words, the frequency response of the channel is flat within this interval and only one channel estimate is needed for each subband. With some misuse of notation, $N_f$ denotes either the number of subcarriers or the number of subbands.

For a certain port $n$, the precoder $\vw_n$ can be unvectorized into $N_f$ vectors as below:
\begin{align}
\mW_n \defi [\vw_{n,1}, \vw_{n,2}, \cdots, \vw_{n,N_f}] = \text{unvec}\{\vw_n\},
\end{align}
where $\vw_{n, k} \in \mathbb{C}^{N_t \times 1}$, $k = 1, \cdots, N_f$, is the precoder of CSI-RS for the $k$-th subband. For notational simplicity, we assume the training sequences for each CSI-RS port is identical in different subbands. Denote the reference signal for the $n$-th CSI-RS port as ${{\bf{x}}_n} \in \mathcal{C}^{N_x \times 1}$, which is distributed in a set of $N_x$ neighboring Resource Elements (REs). The reference signals of different antenna ports may share the same set of REs by code-domain sharing (CDM) with orthogonal patterns, or Frequency-domain sharing (FDM) with different subcarriers, or Time-domain sharing (TDM) with different OFDM symbols. For antenna port $n$, $1 \leq n \leq N_a$, the transmitted jointly precoded reference signal is ${{\bf{x}}_n}\vw^T_{n, k}$ at the $k$-th subband. The corresponding received signal of this port at the $k$-th subband is
\begin{align}\label{Eq:ynkt}
{{\bf{y}}_{n,k}}(t) = {{\bf{x}}_n}{{{\vw}}^T_{n,k}}{{\bf{h}}}({f_k},t) + {{\bf{n}}_{n,k}}(t),
\end{align}
where ${{\bf{h}}}({f_k},t) \in \mathbb{C}^{N_t \times 1}$ denotes the DL channel at subband $k$, and ${{\bf{n}}_{n,k}}(t) \in \mathbb{C}^{N_x \times 1}$ is the noise. Note that till now we have only completed part of the precoding, which is the operation ${{{\vw}}^T_{n,k}}{{\bf{h}}}({f_k},t)$. The joint spatial-frequency precoding needs a further action from UE side, which is a summation in frequency domain shown below:
\begin{align}
{{\bf{y}}_n}(t) &= \sum\limits_{k = 1}^{N_f} {{\bf{y}}_{n,k}}(t) \label{Eq:ynt_sigma}\\
& = {{\bf{x}}_n}\vw_n^T {\underline{\bf{h}}}(t) + {{\bf{n}}_n}(t),
\end{align}
where
\begin{align}
{{\bf{n}}_n}(t) = \sum\limits_{k = 1}^{N_f} {{{\bf{n}}_{n,k}}(t)} \in \mathcal{C}^{N_x \times 1}.
\end{align}
${\underline{\bf{h}}}(t)$ is the vectorized wideband channel as defined in Eq. (\ref{Eq:uhxt}). The $\sum$ operation in Eq. (\ref{Eq:ynt_sigma}) is done by the UE. Rewriting the received signal for all $N_a$ ports in matrix form, we have
\begin{align}\label{Eq:Yt}
{\bf{Y}}(t) = {\bf{Xg}}(t) + {\bf{n}}(t),
\end{align}
where ${\bf{X}} \in \mathcal{C}^{N_a N_x \times N_a}$ is a block matrix containing all the training sequences
\begin{align}\label{Eq:X}
{\bf{X}} \defi \left[ {\begin{array}{*{20}{c}}
{{{\bf{x}}_1}}&{}&{}\\
{}& \ddots &{}\\
{}&{}&{{{\bf{x}}_{{N_a}}}}
\end{array}} \right].
\end{align}
${\bf{g}}(t) \in \mathcal{C}^{N_a \times 1}$ is the spatial-frequency jointly precoded effective channel:
\begin{align}\label{Eq:gt}
{\bf{g}}(t) \defi {\left[ {\begin{array}{*{20}{c}}
{{{{g}}_1}(t)}&{{{{g}}_2}(t)}& \cdots &{{{{g}}_{{N_a}}}(t)}
\end{array}} \right]^T},
\end{align}
with the effective scalar channel for port $n$ defined as
\begin{align}\label{Eq:gnt}
g_n(t) \defi \vw_n^T {\underline{\bf{h}}}(t).
\end{align}
Note that $g_n(t)$ can be regarded as a projection of the wideband channel onto the linear space determined by the eigen-mode ${\vu}_n^{{\rm{(J)}}}$ of the joint spatial-frequency covariance matrix ${{\bf{R}}^{{\rm{(J)}}}}$. The UE then obtains an estimate of the projection $\hat{g}_n(t)$ based on the known pilot sequence $\vx_n$ by simply performing an estimation of the effective channel ${{{\vw}}^T_{n,k}}{{\bf{h}}}({f_k},t)$ of each subband $k$ and summing them up over all $N_f$ subbands. Alternatively, the UE may first perform a summation of the received reference signal as in Eq. (\ref{Eq:ynt_sigma}), and then estimate the effective scalar channel $g_n(t)$, provided that the training sequence for a certain port is identical for all subbands.

Finally, the UE feeds back the projections $\hat{g}_n(t), n = 1, \cdots, N_a$ after a proper quantization, e.g., elementwise quantization \cite{3gpp:38.214} or more sophisticated schemes like random vector quantization (with higher complexity), to the base station. The DL channel is reconstructed as
\begin{align}\label{Eq:huhatDt_PCR}
{\underline{\bf{\hat h}}}(t) = \sum\limits_{n = 1}^{{N_a}} {{{\hat g}_n}(t)} {\bf{w}}_n^*.
\end{align}

We point out the fundamental difference between the classical precoding and the proposed spatial-frequency joint precoding here.
The classical precoding is to apply a beamforming weight vector (i.e., the precoder) on the base station antennas while they are transmitting signals. The UE receives the superimposed signals of all transmit antennas. Such a process is done by the base station alone.
On the contrary, the joint spatial-frequency precoding functions in a different manner, as it entails a joint processing of the base station and the UE. More precisely, the base station performs an elementwise multiplication, while the UE makes a summation. This is because in an OFDM system, the signal transmitted in all subcarriers are orthogonal to one another, which prohibits the summation of signals with orthogonal frequencies over the air as in the spatial-domain precoding case. As a result, the process consists of three steps, 1) Each base station antenna multiplies the CSI-RS at the pilot-carrying resource elements by the corresponding entry of $\vw_n$; 2) Upon receiving the precoded CSI-RS, the UE estimates the effective channels for all pilot-carrying subcarriers; 3) the UE sums up these effective channels over all pilot-carrying subcarriers.

The proposed PCR codebook scheme is summarized in Algorithm \ref{Alg:PartialReciprocityCodebook}.
\begin{algorithm}
\caption{Partial Channel Reciprocity based Codebook Scheme}
\begin{algorithmic}[1]\label{Alg:PartialReciprocityCodebook}

\STATE{The base station obtains the joint spatial-frequency precoders $\vw_n$ $n = 1, \cdots, N_a$ by eigenvalue decomposition of the covariance matrix $\mR^\text{(J)}$; }

\STATE{The base station applies the precoders when transmitting CSI-RS;}

\STATE{For each antenna port, the UE estimates the effective channels in all $N_f$ subbands and sum them up to obtain an estimate of $g_n(t)$;}

\STATE{The UE feeds back the scalar values $g_n(t), n = 1, \cdots, N_a$;}

\STATE{The base station reconstructs the DL CSI with Eq. (\ref{Eq:huhatDt_PCR});}

\end{algorithmic}
\end{algorithm}

Note that a prerequisite of this algorithm is that the DL channel covariance matrix $\mR^\text{(J)}$ is known by the base station. In practice, the UL channel covariance matrices are easily obtained by channel estimation using UL pilot (SRS, Sounding Reference Signal). The DL channel covariance matrix may be obtained by transforming the UL covariance matrix to DL using the Hilbert space projection method in \cite{miretti2018}.
Since the second-order statistics are slow-varying, their eigenvectors may not have to be updated very frequently. What is more, it is also possible to replace the DL channel covariance matrices with the UL ones without any transformation. This only leads to minor performance drops, which will be shown in Sec. \ref{sec:numericalResult}.

Overall, this proposed training and feedback scheme greatly reduces the computation burden at the UE side, as the spatial and frequency-domain precoders are computed and applied by the base station alone. Such an operation is transparent to the UE, which does not have to know the precoders themselves. In order to obtain coefficients $\hat{\bf{g}}(t)$ for feedback, the UE only has to perform an estimation of the precoded effective channels and an addition in frequency domain.

The complexity of this scheme is now analyzed. The main operation for the UE is the CSI-RS based channel estimation and a summation. Since the channel estimation is done by correlating the received signal with the pilot sequence of length $N_x$ for all $N_f$ subbands and $N_a$ ports, it requires $N_x N_f N_a$ floating point operations (FLOPs). The summation in frequency domain takes $N_f N_a$ FLOPs. Therefore, the total complexity order for the UE is $\mathcal{O}(N_x N_f N_a)$. Given that the orders of magnitude for $N_x$, $N_f$, and $N_a$ are all ten for practical 5G systems, the complexity of our proposed method is quite low for UEs.

We now analyze the asymptotic performance of our proposed PCR method. We derive the amount of feedback required by PCR method to achieve asymptotically error-free CSI feedback. Note that the number of antenna ports for CSI-RS, $N_a$, is also the number of scalar coefficients needed for feedback in our scheme for the case of single-antenna UE. The boundaries of the angle and delay distribution is the same as in Eq. (\ref{Eq:etaq}) and Eq. (\ref{Eq:etau}). The result is shown in Theorem \ref{theoFeedbackAmount}.
\begin{theorem}\label{theoFeedbackAmount}
The quantization error of the proposed PCR scheme yields
\begin{equation}\label{Eq:theoFeedbackAmount}
\mathop {\lim }\limits_{{N_h},{N_v},{N_f} \to \infty } \frac{{\left\| {{\bf{\hat H}} - {\bf{H}}} \right\|_F^2}}{{\left\| {\bf{H}} \right\|_F^2}} = 0,
\end{equation}
under the condition that the number of scalar coefficients to feedback satisfies
\begin{align}
N_a \geq \rho^{{\rm{(J)}}} N_h N_v N_f,
\end{align}
where ${\bf{H}}$ is the wideband channel matrix and ${\bf{\hat H}}$ is the reconstructed channel matrix based on the feedback from the UE. $\rho^{{\rm{(J)}}}$ is the rank ratio defined in Eq. (\ref{Eq:rhoJ}).
\end{theorem}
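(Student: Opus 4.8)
The plan is to recast the matrix error as a coordinate error in the eigenbasis of $\mR^{\mathrm{(J)}}$ and then show that the discarded part of the channel carries vanishing relative energy. First I would use that vectorization is an isometry for the Frobenius norm, so that $\|{\bf{\hat H}}-{\bf{H}}\|_F^2 = \|{\underline{\bf{\hat h}}}(t)-{\underline{\bf{h}}}(t)\|_2^2$ and $\|{\bf{H}}\|_F^2 = \|{\underline{\bf{h}}}(t)\|_2^2$, moving everything to the vectorized channel of Eq. \eqref{Eq:uhxt}. Since the columns of ${\mU}^{\mathrm{(J)}}$ in Eq. \eqref{Eq:EigsJ} form an orthonormal basis of $\mathbb{C}^{N_t N_f}$, I would expand ${\underline{\bf{h}}}(t)=\sum_{n=1}^{N_t N_f} g_n(t){\vu}_n^{\mathrm{(J)}}$, where — because $\vw_n=({\vu}_n^{\mathrm{(J)}})^*$ by Eq. \eqref{Eq:wn} — the coordinate $g_n(t)=\vw_n^T{\underline{\bf{h}}}(t)=({\vu}_n^{\mathrm{(J)}})^H{\underline{\bf{h}}}(t)$ of Eq. \eqref{Eq:gnt} is exactly the $n$-th projection. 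The reconstruction Eq. \eqref{Eq:huhatDt_PCR} keeps only the first $N_a$ modes with their quantized coefficients, so by Parseval the squared error splits orthogonally,
\begin{align}
\|{\underline{\bf{\hat h}}}(t)-{\underline{\bf{h}}}(t)\|_2^2 = \sum_{n=1}^{N_a}|{\hat g}_n(t)-g_n(t)|^2 + \sum_{n=N_a+1}^{N_t N_f}|g_n(t)|^2, \nonumber
\end{align}
into a quantization part on the captured modes plus a truncation tail on the discarded ones.

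The key step is to kill the truncation tail. Averaging over time and using $\mathbb{E}\{|g_n(t)|^2\}=({\vu}_n^{\mathrm{(J)}})^H\mR^{\mathrm{(J)}}{\vu}_n^{\mathrm{(J)}}=\lambda_n$ from the eigendecomposition Eq. \eqref{Eq:Rj_EVD}, the mean truncated energy equals $\sum_{n=N_a+1}^{N_t N_f}\lambda_n$, the sum of the trailing eigenvalues of $\mR^{\mathrm{(J)}}$. Here I would invoke Theorem \ref{theoJointRank}: as $N_h,N_v,N_f\to\infty$ the signal subspace has dimension $\operatorname{rank}\{\mR^{\mathrm{(J)}}\}=\rho^{\mathrm{(J)}}N_h N_v N_f$, so the eigenvalues indexed beyond this rank are asymptotically negligible. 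Under the hypothesis $N_a\geq\rho^{\mathrm{(J)}}N_h N_v N_f$, the truncation sum begins past the signal subspace, whence $\sum_{n>N_a}\lambda_n\to 0$ and the retained $N_a$ eigenmodes capture the channel without loss in the limit.

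Finally I would dispose of the quantization term. Being supported entirely on the captured modes, once normalized by $\|{\underline{\bf{h}}}(t)\|_2^2=\sum_n|g_n(t)|^2$ it reduces to the relative quantizer distortion, which a sufficiently fine elementwise or vector quantizer drives below any prescribed level; combined with the vanishing truncation ratio this yields the limit Eq. \eqref{Eq:theoFeedbackAmount} in the mean-square sense. The main obstacle I anticipate lies in the truncation step itself: Theorem \ref{theoJointRank} delivers only a limiting rank ratio, so at finite $N_h,N_v,N_f$ the discarded eigenvalues are small but nonzero (spectral leakage in the transition band). The delicate point is therefore to show that their aggregate energy $\sum_{n>N_a}\lambda_n$, relative to the total power $\operatorname{tr}\{\mR^{\mathrm{(J)}}\}=\sum_n\lambda_n$, tends to zero — a statement about eigenvalue clustering (the prolate-spheroidal phenomenon underlying the rank count) rather than merely about the fraction of nonzero eigenvalues.
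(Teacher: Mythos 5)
Your proposal follows essentially the same route as the paper's proof: both identify the reconstruction ${\underline{\bf{\hat h}}}={\bf{U}}_s^{{\rm{(J)}}}({\bf{U}}_s^{{\rm{(J)}}})^H{\underline{\bf{h}}}$ as an orthogonal projection onto the span of the $N_a$ dominant eigenvectors of $\mR^{\rm{(J)}}$ and then invoke Theorem \ref{theoJointRank} to conclude that ${\underline{\bf{h}}}$ lies (asymptotically) in a signal subspace of dimension at most $\rho^{\rm{(J)}}N_hN_vN_f$, so the truncation loss vanishes when $N_a$ meets the stated bound. You are in fact somewhat more careful than the paper in two respects — you explicitly separate and bound the quantization term (the paper silently takes $\hat g_n=g_n$) and you flag that the argument really needs the trailing eigenvalue mass $\sum_{n>N_a}\lambda_n/\operatorname{tr}\{\mR^{\rm{(J)}}\}\to 0$ rather than just a limiting rank ratio — but these are refinements of, not departures from, the paper's argument.
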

\begin{proof}
\quad \emph{Proof:} The proof can be found in Appendix \ref{proof:theoFeedbackAmount}.
\end{proof}

Theorem \ref{theoFeedbackAmount} demonstrates that our PCR scheme only requires a small amount of feedback, since the rank ratio $\rho^{\rm{(J)}} \ll 1$. In fact, $\rho^{\rm{(J)}}$ is also the CSI compression ratio of our scheme compared to the method of feeding back the full channel matrix ${\bf{H}}$. For the case that angles and delays are modeled as constant over a long period as in the CDL model, the number of feedback coefficients $N_a$ does not scale with the number of antennas or the bandwidth. It is in fact a finite value that is always smaller than the number of paths. This is because in a practical model such as CDL, the multipaths exist in the form of clusters. And the path angles inside a certain cluster are close to each other, since they are reflected by one or several neighboring scatterers. Some examples in Sec. \ref{sec:numericalResult} will show that although the total number of paths is up to several hundreds, the number of feedback coefficients can be as small as 64 for the whole wideband massive MIMO channel.

\section{Low Complexity Alternatives}\label{sec:lowComplex}
The PCR codebook scheme in Algorithm \ref{Alg:PartialReciprocityCodebook} of Sec. \ref{sec:PCR} requires the eigen-decomposition of a large matrix $\mR^\text{(J)}$. Although this operation is not done in a real-time manner, the complexity is still high for a base station. In this section, we propose two low-complexity alternatives which have mild performance losses compared with the original PCR codebook scheme. Note that all the proposed schemes have the same complexity at the UE side.

\subsection{PCR codebook with spatial/frequency eigen basis}\label{sec:PCR-E}
We now propose an alternative - the PCR codebook with spatial/frequency eigen basis (PCR-E) to mitigate this impediment of high complexity at the base station. It is done by selecting the eigenvectors of the covariance matrices $\mR^\text{(S)} \in \mathbb{C}^{N_t \times N_t}$ and $\mR^\text{(F)} \in \mathbb{C}^{N_f \times N_f}$ separately and then form a joint spatial-frequency precoder. The design of a joint spatial-frequency precoder breaks down to a spatial-domain precoder and a frequency-domain precoder.

The eigen-value decomposition of the spatial and frequency channel covariance matrices are written as
\begin{align}
{{\bf{R}}^{{\rm{(S)}}}} &= {\mU}^{{\rm{(S)}}} {\mathbf{\Sigma}^{{\rm{(S)}}}} {\mU}^{{{\rm{(S)}}}H}, \label{Eq:Rs_EVD}\\
{{\bf{R}}^{{\rm{(F)}}}} &= {\mU}^{{\rm{(F)}}} {\mathbf{\Sigma}^{{\rm{(F)}}}} {\mU}^{{{\rm{(F)}}}H}, \label{Eq:Rf_EVD}
\end{align}
where ${\mU}^{{\rm{(S)}}}$ and ${\mU}^{{\rm{(F)}}}$ contain the eigen-vectors of ${{\bf{R}}^{{\rm{(S)}}}}$  and ${{\bf{R}}^{{\rm{(F)}}}}$ respectively:
\begin{align}\label{Eq:EigsS}
{\mU}^{{\rm{(S)}}} &= \left[{\vu}_1^{{\rm{(S)}}}, {\vu}_2^{{\rm{(S)}}}, \cdots, {\vu}_{N_t}^{{\rm{(S)}}}\right], \\
{\mU}^{{\rm{(F)}}} &= \left[{\vu}_1^{{\rm{(F)}}}, {\vu}_2^{{\rm{(F)}}}, \cdots, {\vu}_{N_f}^{{\rm{(F)}}}\right]. \label{Eq:EigsF}
\end{align}
${\mathbf{\Sigma}^{{\rm{(S)}}}}$ and ${\mathbf{\Sigma}^{{\rm{(F)}}}}$ are diagonal matrices with their diagonal entries (arranged in non-increasing order) being the eigenvalues of ${{\bf{R}}^{{\rm{(S)}}}}$ and ${{\bf{R}}^{{\rm{(F)}}}}$ respectively.
We once again exploit the low-rankness properties of ${{\bf{R}}^{{\rm{(S)}}}}$ and ${{\bf{R}}^{{\rm{(F)}}}}$ in this PCR-E codebook scheme. We build the spatial-domain precoder of the CSI-RS such that it is aligned with the dominant eigenvectors of the spatial-domain covariance matrix ${{\bf{R}}^{{\rm{(S)}}}}$. Meanwhile, the frequency-domain precoder is aligned with the dominant eigenvectors of ${{\bf{R}}^{{\rm{(F)}}}}$.
The joint spatial-frequency precoder $\vw_n \in {\mathbb{C}^{1 \times N_t N_f}}$ for port $n$ of CSI-RS is given as
\begin{align}
    \vw_n \defi \vf_n \otimes \vs_n,
\end{align}
where $\vf_n \in {\mathbb{C}^{1 \times N_f}}$ is the conjugate transpose of a certain eigenvector selected from ${\mU}^{{\rm{(F)}}}$ and $\vs_n \in {\mathbb{C}^{1 \times N_t}}$ is the conjugate transpose of a certain eigenvector from  ${\mU}^{{\rm{(S)}}}$.
One simple eigenvector selection method is to choose the eigenvectors corresponding to the greatest eigenvalues of ${{\bf{R}}^{{\rm{(S)}}}}$ and ${{\bf{R}}^{{\rm{(F)}}}}$. However, this method may not lead to the best performance.
The reason is that the multipath distribution of angles is not independent to the distribution of delays in most cases, e.g., when the paths are distributed in the form of clusters. ${{\bf{R}}^{{\rm{(S)}}}}$ only contains the multipath distribution information in angular domain, while ${{\bf{R}}^{{\rm{(F)}}}}$ only reflects the multipath delay distribution. If we choose a set of spatial precoders independently with the set of frequency precoders  and construct the joint precoders by all the combinations of the two sets, it will result in excessive consumption of training and feedback overhead. This is due to the fact that some constructed joint precoders may lead to very small coefficients, which are not necessary for feedback. For the sake of overhead reduction, we propose a selection criterion which selects the spatial precoders and frequency precoders jointly based on recent UL channel samples
${{\bf{H}}^{{\rm{(U)}}}}({t_1}), {{\bf{H}}^{{\rm{(U)}}}}({t_2}), \cdots, {{\bf{H}}^{{\rm{(U)}}}}({t_{N_c}})$.

Define the joint spatial-frequency eigen projection as
\begin{align}\label{Eq:Gu}
\mG(t) \defi {{\left({\bf{U}}^{({\rm{S}})}\right)^H}{{\bf{H}}^{{\rm{(U)}}}}({t})\left({{\bf{U}}^{(F)}}\right)^*}.
\end{align}
Then, the accumulated element-wise power matrix of the projection above is
\begin{align}\label{Eq:calG}
\bm{\mathcal{G}} \defi \sum\limits_{i = 1}^{{N_c}} \mG(t_i) \odot \left(\mG(t_i) \right)^*.
\end{align}
We select the positions of the greatest $N_a$ values in $\bm{\mathcal{G}}$ and denote the row index and column index of the $n$-th position as $r_n$ and $c_n$ respectively. Then, the spatial-domain precoder for the $n$-th CSI-RS port is the conjugate of the $r_n$-th eigenvector of ${{\bf{R}}^{{\rm{(S)}}}}$:
\begin{align}\label{Eq:sn}
\vs_n = \left({\vu}_{r_n}^{{\rm{(S)}}}\right)^*.
\end{align}
Similarly, the frequency-domain precoder for this port is
\begin{align}\label{Eq:fn}
\vf_n = \left({\vu}_{c_n}^{{\rm{(F)}}}\right)^*.
\end{align}
Note that among all $N_a$ ports, there are always some duplicates of the spatial-domain precoder or frequency-domain precoder for different ports. In other words, the following circumstance may take place:
\begin{align}
\exists i \neq j, \text{ s.t. } r_i = r_j \text{ or } c_i = c_j, \text{ for } i,j = 1, \cdots, N_a.
\end{align}
This is due to the fact that the accumulated power matrix $\bm{\mathcal{G}}$ may have some significant entries distributed in the same row or column. However, two different CSI-RS ports never have exactly the same spatial precoder and frequency precoder at the same time.
Regarding the antenna port $n$, $1 \leq n \leq N_a$, the transmitted jointly precoded reference signal is ${{\bf{x}}_n}\left( {{{\bf{f}}_n} \otimes {{\bf{s}}_n}} \right)$. The corresponding received signal at frequency $f_k$, a.k.a., the $k$-th subband, is
\begin{align}
{{\bf{y}}_{n,k}}(t) = {{\bf{x}}_n}{{{f}}_{n,k}}{{\bf{s}}_n}{\bf{h}}({f_k},t) + {{\bf{n}}_{n,k}}(t),
\end{align}
where ${{{f}}_{n,k}}$ is the $k$-th row of the frequency-domain precoder ${{\bf{f}}_{n}}$. After the frequency-domain summation at the UE side, the combined received signal is
\begin{align}
{{\bf{y}}_n}(t) &= \sum\limits_{k = 1}^{N_f} {\left( {{{\bf{x}}_n}{{{f}}_{n,k}}{{\bf{s}}_n}{{\bf{h}}}({f_k},t) + {{\bf{n}}_{n,k}}(t)} \right)} \label{Eq:ynt_sigma_PCRE}\\
&= {{\bf{x}}_n}\left( {{{\bf{f}}_n} \otimes {{\bf{s}}_n}} \right){\underline{\bf{h}}}(t) + {{\bf{n}}_n}(t) ,
\end{align}
where
\begin{align}
{{\bf{n}}_n}(t) = \sum\limits_{k = 1}^{N_f} {{{\bf{n}}_{n,k}}(t)} \in \mathcal{C}^{N_x \times 1}.
\end{align}
The $\sum$ operation in Eq. (\ref{Eq:ynt_sigma_PCRE}) is done by the UE.
Based on the training sequence $\vx_n$, we may obtain the spatial-frequency jointly precoded effective channel:
\begin{align}
{\bf{g}}(t) \defi {\left[ {\begin{array}{*{20}{c}}
{{{{g}}_1}(t)}&{{{{g}}_2}(t)}& \cdots &{{{{g}}_{{N_a}}}(t)}
\end{array}} \right]^T},
\end{align}
with the effective scalar channel for port $n$ defined as
\begin{align}
g_n(t) \defi {{\bf{f}}_n} \otimes {{\bf{s}}_n} {\underline{\bf{h}}}(t).
\end{align}

Then, the UE feeds back the quantized scalars $\hat{g}_n(t), n = 1, \cdots, N_a$ to the base station. Finally the DL channel is reconstructed as
\begin{align}\label{Eq:HhatDt-PCRE}
{{\bf{\widehat H}}}(t) = \sum\limits_{n = 1}^{{N_a}} {{{\hat g}_n}(t)} {\bf{s}}_n^H{\bf{f}}_n^*.
\end{align}

The proposed low-complexity alternative, i.e., PCR-E, is summarized in Algorithm \ref{Alg:PCR-E}.
\begin{algorithm}
\caption{PCR Codebook scheme with spatial/frequency eigen basis}
\begin{algorithmic}[1]\label{Alg:PCR-E}

\STATE{The base station obtains the precoders $\vs_n$ and $\vf_n, n = 1, \cdots, N_a$ by finding the positions of the $N_a$ greatest values in Eq. (\ref{Eq:calG});}

\STATE{The base station applies the precoders when transmitting CSI-RS;}

\STATE{For each antenna port, the UE estimates the effective channels in all $N_f$ subbands and sum them up to obtain $g_n(t)$;}

\STATE{The UE feeds back the quantized $g_n(t), n = 1, \cdots, N_a$;}

\STATE{The base station reconstructs the DL CSI with Eq. (\ref{Eq:HhatDt-PCRE});}

\end{algorithmic}
\end{algorithm}

Note that Algorithm \ref{Alg:PCR-E} assumes the eigenvectors ${\mU}^{{\rm{(S)}}}$ and ${\mU}^{{\rm{(F)}}}$ of the DL channel covariance matrices are known by the base station. This can be achieved with the UL-DL channel covariance transformation method in \cite{miretti2018}.

The complexity of this scheme at UE side is exactly the same as in the PCR scheme of Sec. \ref{sec:PCR}.
As for the base station, the main complexity comes from the searching of the spatial and frequency eigenmodes, which has the complexity order of $\mathcal{O}(N_c (N_t^2 N_f + N_t (N_f)^2)$. Nevertheless, the searching can be made faster by considering only the dominant eigenvectors, or even skipped at the cost of some mild performance loss.

\subsection{PCR codebook with DFT basis}\label{sec:PCR-D}
The two schemes mentioned above both needs the eigenvalue decomposition of channel covariance matrix. In this section, we propose an alternative named PCR codebook with DFT basis (PCR-D) that circumvents such an operation and thus has even lower complexity at the base station side. The key ingredient of this scheme is the exploitation of the channel reciprocity of the multipath angles and delays. Similar to PCR-E, the joint spatial-frequency precoder again consists of two components, the spatial precoder and the frequency precoder, which are DFT-based vectors instead of eigenvectors.

Denote a DFT matrix of size $K \times K$ as
\begin{align*}
\mbox{\small$\displaystyle
\mE(K) \triangleq \frac{1}{\sqrt{K}}\begin{bmatrix}
\omega^{0\cdot0} & \omega^{0\cdot1} & \cdots & \omega^{0(K-1)} \\
\omega^{1\cdot0} & \omega^{1\cdot1} & \cdots & \omega^{1(K-1)} \\
\vdots & \vdots & \ddots & \vdots \\
\omega^{(K-1)\cdot0} & \omega^{(K-1)\cdot 1}& \cdots & \omega^{(K-1)(K-1)} \end{bmatrix},$}
\end{align*}
where $\omega \triangleq e^{-2 \pi j / K}$.
According to the UPA antenna array structure, we construct a DFT-based spatial orthogonal basis as $\mS \defi \mE(N_h) \otimes \mE(N_v)$. Note that for the case of dual-polarized antennas, there are two options: we may either construct the spatial basis as
\begin{align}
\mS = \text{diag}\{\mE(N_h) \otimes \mE(N_v), \mE(N_h) \otimes \mE(N_v)\},
\end{align}
or process the two polarizations one by one. These two options are equivalent. Similarly, a frequency basis is defined as $\mF \defi \mE(N_f)$.

The spatial precoders are selected from the column vectors in $\mS$, while the frequency precoders are from $\mF$. The selection method is similar to PCR-E scheme, which relies on the projections of the UL channel samples ${{\bf{H}}^{{\rm{(U)}}}}({t_1}), {{\bf{H}}^{{\rm{(U)}}}}({t_2}), \cdots, {{\bf{H}}^{{\rm{(U)}}}}({t_{N_c}})$ on the spatial and frequency basis:
\begin{align}\label{Eq:calG_PCRE}
\bm{\mathcal{G}} \defi \sum\limits_{i = 1}^{{N_c}} \left( {\mS^H}{{\bf{H}}^{{\rm{(U)}}}}({t_i})\mF \right) \odot \left( {\mS^H}{{\bf{H}}^{{\rm{(U)}}}}({t_i})\mF \right)^*.
\end{align}
Denote the row index and column index of the $n$-th greatest value in $\bm{\mathcal{G}}$ as $r_n$ and $c_n$ respectively.
For the $n$-th CSI-RS port,
The spatial-domain precoder $\vs_n$ is
\begin{align}
\vs_n = \left(\mS_{r_n}\right)^H,
\end{align}
where $\mS_{r_n}$ is the $r_n$-th column of $\mS$.
The corresponding frequency-domain precoder $\vf_n$ is the $c_n$-th row of $\mF$. The joint spatial-frequency precoder is $\vf_n \otimes \vs_n$. The remaining operations are the same as in the PCR-E scheme and thus omitted. This scheme is summarized in Algorithm \ref{Alg:PCR-D}.
\begin{algorithm}
\caption{PCR Codebook scheme with DFT basis}
\begin{algorithmic}[1]\label{Alg:PCR-D}

\STATE{The base station obtains the precoders $\vs_n$ and $\vf_n, n = 1, \cdots, N_a$ by finding the positions of the $N_a$ greatest values in Eq. (\ref{Eq:calG_PCRE});}

\STATE{The base station applies the precoders when transmitting CSI-RS;}

\STATE{For each antenna port, the UE estimates the effective channels in all $N_f$ subbands and sum them up to obtain an estimate of $g_n(t)$;}

\STATE{The UE feeds back the quantized $g_n(t), n = 1, \cdots, N_a$;}

\STATE{The base station reconstructs the DL CSI by Eq. (\ref{Eq:HhatDt-PCRE});}

\end{algorithmic}
\end{algorithm}

\section{Numerical Results}\label{sec:numericalResult}
In this section, we show the simulation results of our proposed schemes, primarily under the practical model of 3GPP. The main parameters are listed in Table \ref{Tb:basicParas}. The configuration of the antennas, including the base station and the UE, is expressed by a tuple $(\underline{M},\underline{N},\underline{P})$, where $\underline{M}, \underline{N}, \underline{P}$ denote the number of rows, columns, and polarizations of the antenna elements respectively. We consider a typical parameter set of 5G with the center frequency at 3.5 GHz and the subcarrier spacing of 30 kHz. We adopt the CDL-A and CDL-D models in the evaluations. In CDL-A model, the total number of multipaths is 460, i.e., each UE has 23 multipath clusters and each cluster contains 20 paths. No Line-of-Sight (LoS) path exists. While in CDL-D channel model, there are totally 273 paths with one LoS component. The distributions of the paths in CDL-A model and CDL-D model are defined in Table 7.7.1-1 and Table 7.7.1-4 of \cite{3gpp:38.901} respectively.

The UEs have $N_r = 2$ antennas in simulations, and for each UE antenna, a set of $N_a$ coefficients are fed back to the BS. Therefore, the number of feedback coefficients is $2 N_a$ for the schemes of PCR, PCR-E, and PCR-D. The maximum number of streams for each UE is two.

\begin{table}[h]
\caption{Basic simulation parameters}\label{Tb:basicParas}
\centering
\begin{tabular}{|p{2.1cm}|p{5.5cm}|}
  \hline
  Scenario          & 3D Urban Macro (3D UMa) \\
  \hline
  DL center frequency           & 3.5 GHz \\
  \hline
  UL center frequency           & 3.4 GHz \\
    \hline
  Subcarrier spacing         & 30 kHz \\
    \hline
  Bandwidth  &  20 MHz (51 RBs)\\
    \hline
  Number of UEs  &  8\\
    \hline
  BS antenna configuration    & $(\underline{M}, \underline{N}, \underline{P}) = (2,8,2)/(4,8,2)$, $(D_h, D_v) = (0.5, 0.8)\lambda$, the polarization angles are $\pm 45^\circ$ \\
    \hline
  UE antenna configuration    & $(\underline{M}, \underline{N}, \underline{P}) = (1,1,2,1,1)$,  the polarization angles are $0^\circ$ and $90^\circ$\\
    \hline
  Channel model     & CDL-A, CDL-D \\
    \hline
  Delay spread       & 300 ns \\
    \hline
  DL precoder       & EZF \\
    \hline
  UE receiver       & MMSE-IRC \\
    \hline
  Number of paths       & 460, 271 \\
    \hline
\end{tabular}
\end{table}

We first visualize the effects of the dimension reduction of the three proposed methods in Fig. \ref{fig:dimensionReduction}.
\begin{figure}[htb]
	\centering
	\subfigure[]{\includegraphics[width=0.45\linewidth]{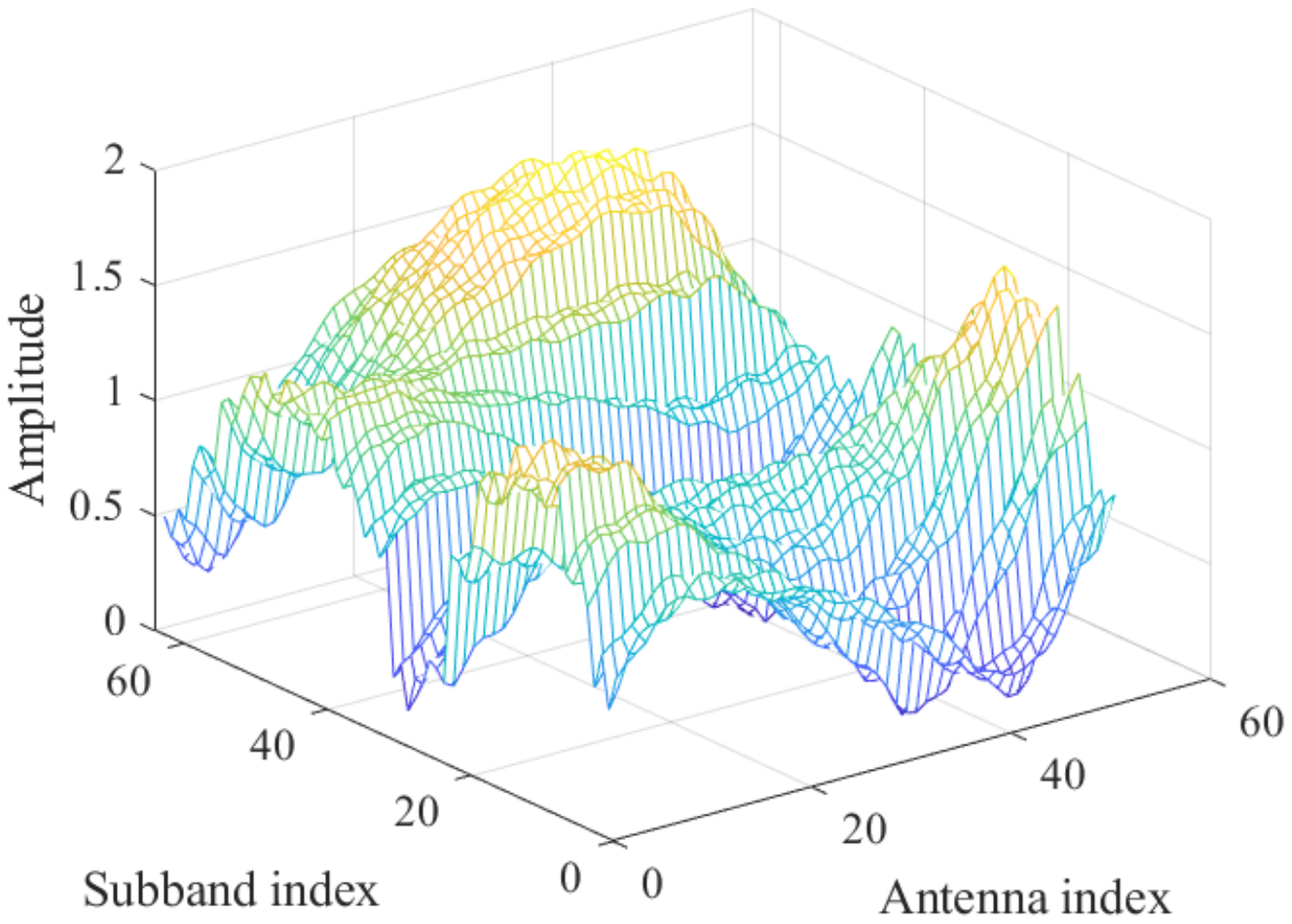}}%
	 \subfigure[]{\includegraphics[width=0.45\linewidth]{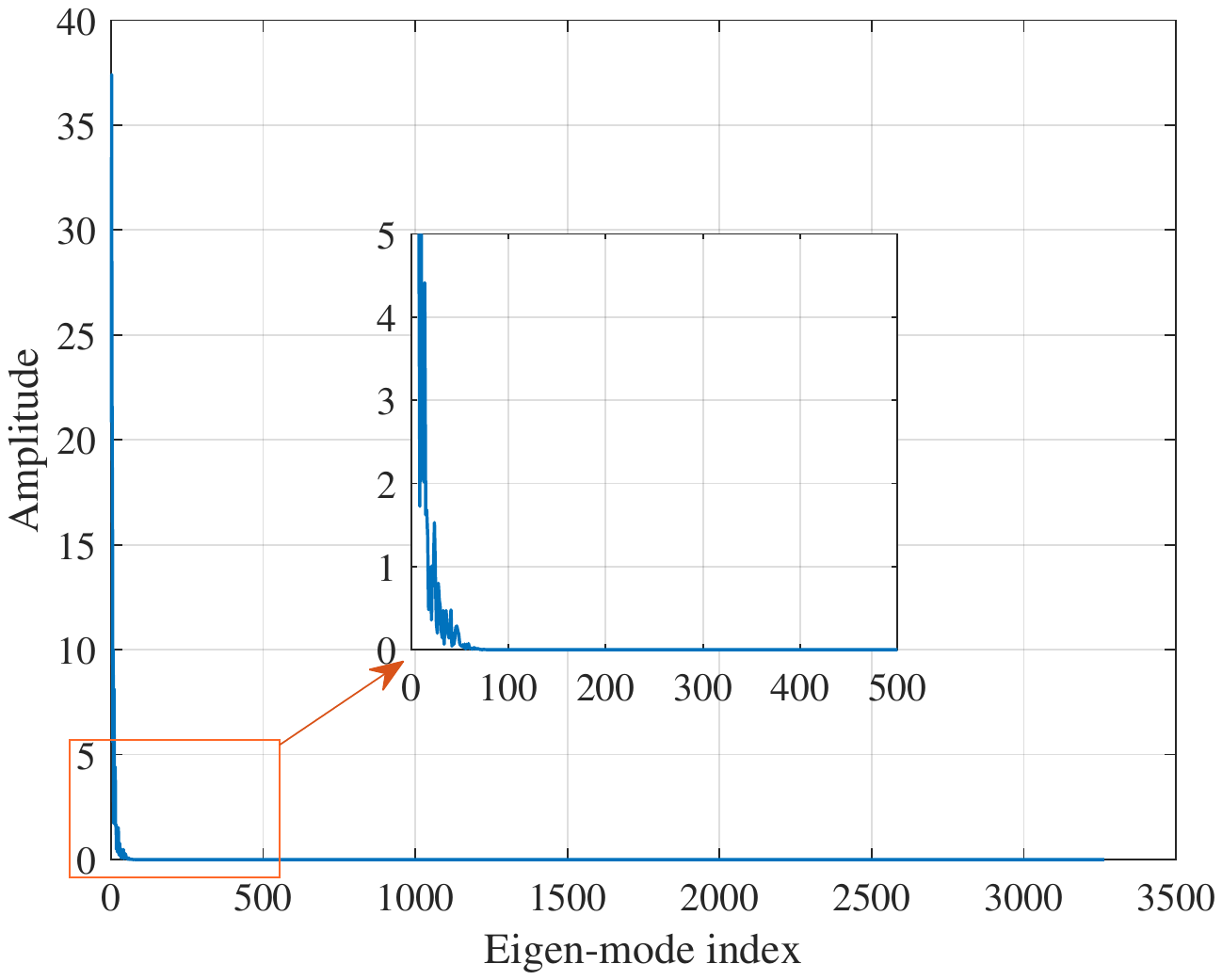}}%
	\hfill
	 \subfigure[]{\includegraphics[width=0.45\linewidth]{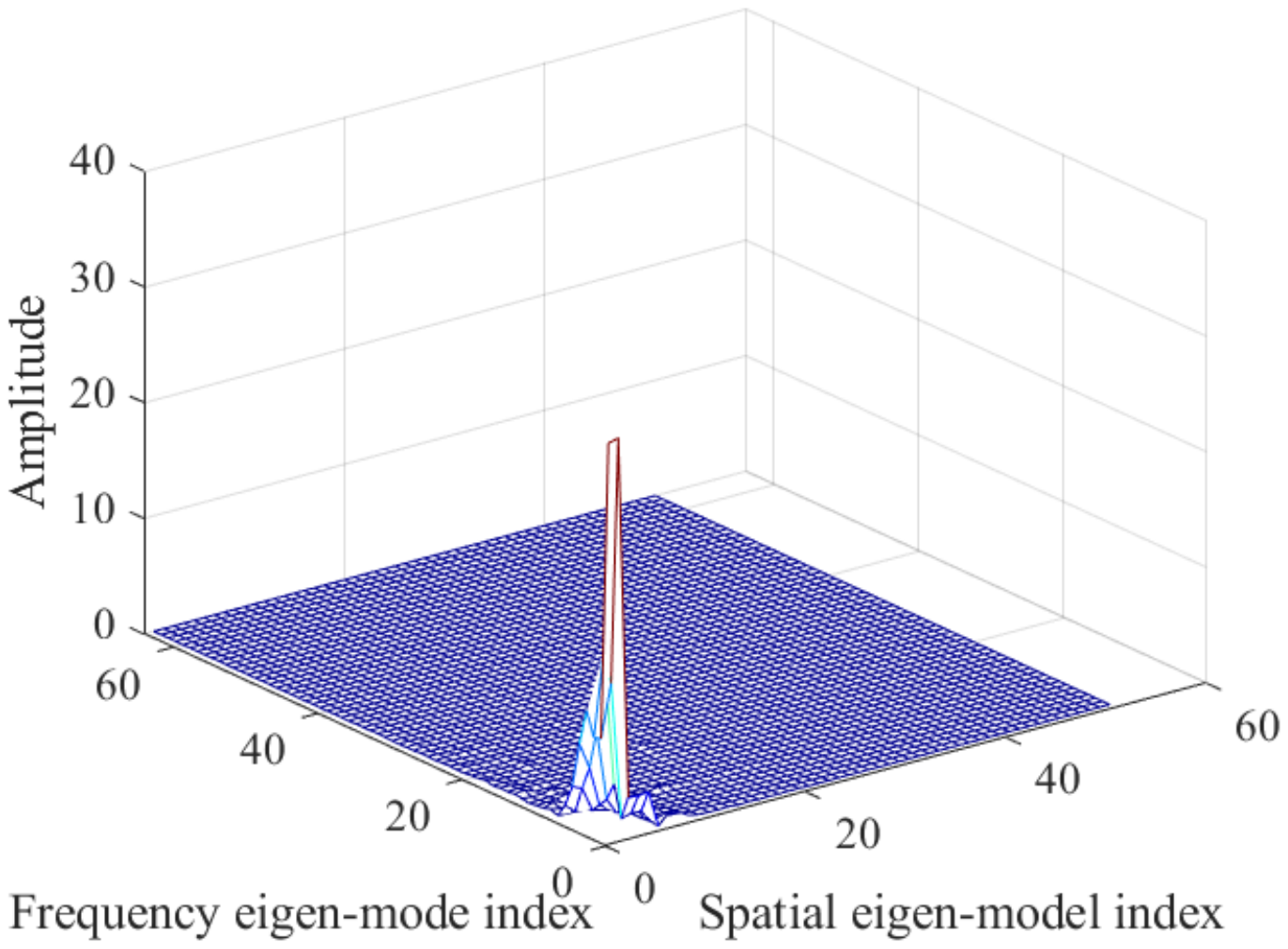}}%
	 \subfigure[]{\includegraphics[width=0.45\linewidth]{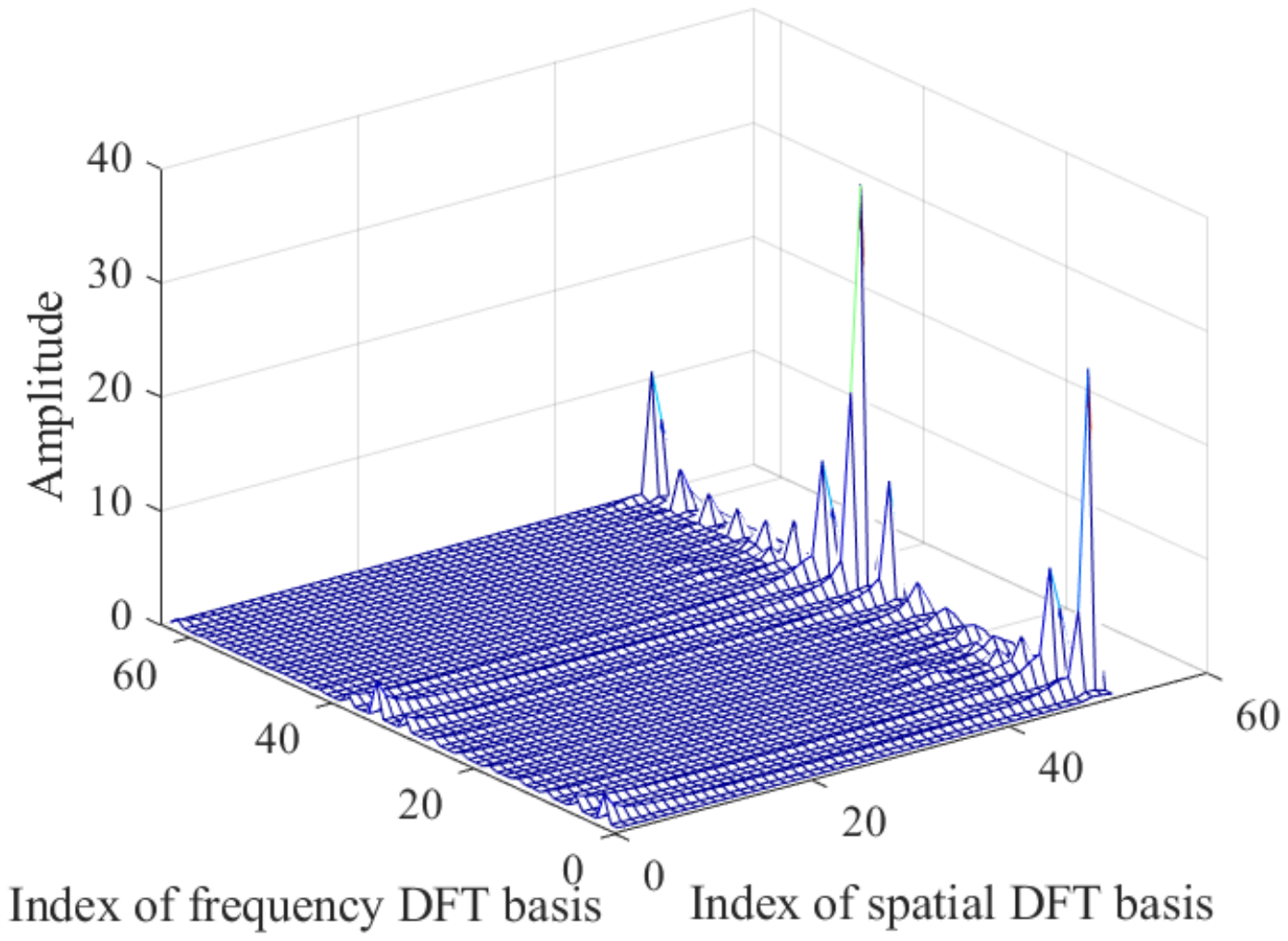}}%
	\caption{The element-wise amplitude of the wideband channel matrix and its projections (a) The channel matrix; (b) the projections onto the bases of the PCR scheme; (c) the projections onto the bases of the PCR-E scheme; (d) the projections onto the bases of the PCR-D scheme.}
	\label{fig:dimensionReduction}
\end{figure}
One snapshot of the wideband channel $\mH$ between the base station and UE is taken as an example. The antenna configuration at the base station is $(\underline{M},\underline{N},\underline{P}) = (4, 8, 2)$, and the number of subband is 51, as shown in Table \ref{Tb:basicParas}.
Fig. \ref{fig:dimensionReduction} (a) shows the amplitude of the  wideband channel coefficient of $\mH \in \mathbb{C}^{N_t \times N_f}$ where $N_t = 64, N_f = 51$. Fig. \ref{fig:dimensionReduction} (b) is the amplitude of the projected channel onto the orthogonal basis Eq. (\ref{Eq:EigsJ}) of the proposed PCR scheme in Sec. \ref{sec:PCR}. We may observe that only a very small portion of the projected values are significant. Such a high sparsity allows us to reconstructed the full channel matrix with only a small number of non-negligible scalar coefficients.
Fig. \ref{fig:dimensionReduction} (c) is the amplitude of the projection onto the basis of Eq. (\ref{Eq:EigsS}) and Eq. (\ref{Eq:EigsF}) as in the low-complexity alternative PCR-E scheme. Fig. \ref{fig:dimensionReduction} (d) is amplitude when projected on to the 2D-DFT basis as in the PCR-D scheme. From (c) and (d) we also observe that the projected channel matrix is sparse, and the eigen-mode based projection is better than the DFT-based projection in terms of the sparse representation of the channel matrix.

\begin{figure}[h]
  \centering
  \includegraphics[width=3.5in]{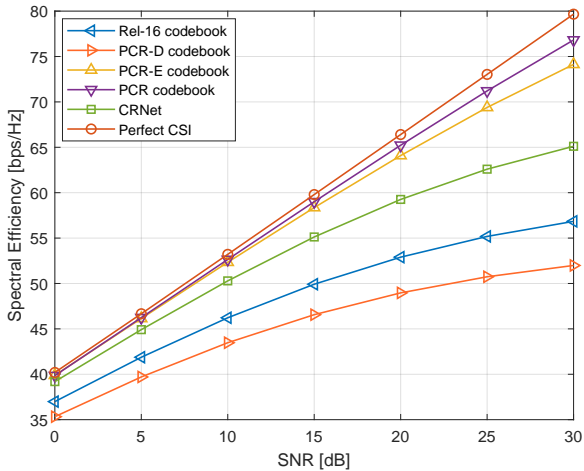}\\
  \caption{The spectral efficiency vs. SNR, $N_t = 32$, $N_a = 32$, CDL-A model.} \label{fig:32T2R_SE_vs_SNR}
\end{figure}

\begin{figure}[h]
  \centering
  \includegraphics[width=3.5in]{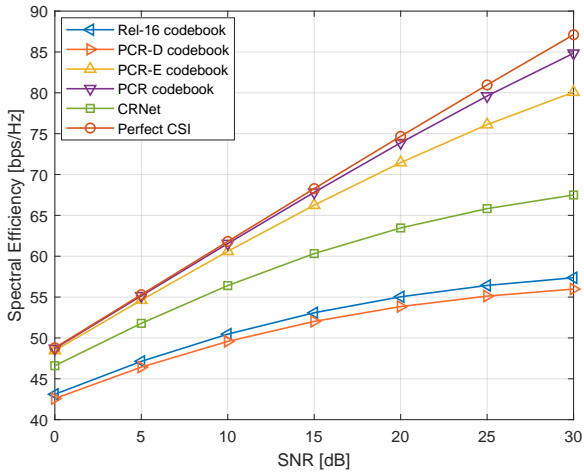}\\
  \caption{The spectral efficiency vs. SNR, $N_t = 64$, $N_a = 32$, CDL-A model.} \label{fig:64T2R_SE_vs_SNR}
\end{figure}

Fig. \ref{fig:32T2R_SE_vs_SNR} and Fig. \ref{fig:64T2R_SE_vs_SNR} shows the spectral efficiency performance of the proposed codebook schemes for the settings that the base station is equipped with 32 antennas and 64 antennas, respectively. The spectral efficiency is computed as $\sum_{n = 1}^{{N_u}} \log_2 (1 + \text{SINR}_n)$ averaged over time and frequency, where $\text{SINR}_n$ is the Signal and Interference to Noise Ratio for the $n$-th UE, and $N_u = 8$ is the total number of the active UEs. The number of coefficients to feed back is $2 N_a = 64$ for each of the schemes proposed by this paper. The benchmark of Enhanced Type II codebook in Rel-16 also requires $2 N_a$ feedback coefficients, i.e., two sets of $N_a$ 2D DFT projections corresponding to the two UE antennas respectively. However, the Rel-16 codebook requires the feedback of indices apart from the $2 N_a$ scalar coefficients, to notify the positions of the selected $N_a$ projections of the channel matrix on the 2D-DFT basis. The curve labeled with ``CRNet" is the Deep Learning-based CSI compression and reconstruction scheme proposed by \cite{Lu2020}. The CRNet consists of an encoder and decoder, where the encoder performs compression at the UE and the decoder reconstructs the original CSI at the BS. The training, validation, and test datasets contain 80000, 20000 and 20000 wideband channel samples respectively. The number of feedback coefficients for CRNet is also 64.
We may notice that our proposed PCR scheme is very close to the ideal case that perfect CSI is known at the base station. It is interesting to note that the wideband massive MIMO channel, although very large in size, can be effectively represented by only $2 N_a = 64$ scalar coefficients using our PCR codebook scheme. The low-complexity alternative of PCR-E has some mild performance loss compared to the PCR scheme. Nevertheless it has significant gains over the Enhanced Type II Codebook in the latest Rel-16 standard of 5G. The reason is that there are more power leakages under DFT basis than under eigen-basis, which is demonstrated in Fig. \ref{fig:dimensionReduction}. We may also observe that the low-complexity alternative of PCR-D scheme has lower performance than the Rel-16 codebook of 5G when the numbers of feedback coefficients are equal. However, PCR-D does not need to feedback indices, and the total amount of overhead is thus smaller. The complexity of PCR-D is also much lower.
In practice, there is always a trade-off between complexity and performance, and a substantial complexity reduction can be achieved at the cost of feeding back several more scalar coefficients.

\begin{table}[h!]
  \begin{center}
    \caption{Comparisons of the schemes}\label{Tb:SchemesComparison}
    \label{tab:table1}
    \begin{tabular}{|c|c|c|c|c|} 
      \hline
      {\textbf{Schemes}} & {\textbf{PCR} }/{\textbf{PCR-E}} & {\textbf{PCR-D}} & {\textbf{Rel-16}}\\
      \hline
      UE complexity & $\mathcal{O}(N_x N_f N_a)$  & $\mathcal{O}(N_x N_f N_a)$ & {\makecell[c]{$\mathcal{O}(N_x N_f N_a) + \mathcal{O}$ \\ $(N_t N_f \log (N_t N_f))$}} \\
      \hline
      Overhead & 2 $N_a$ scalars & 2 $N_a$ scalars & {\makecell[c]{2 $N_a$ scalars + \\$N_a$ indices}}\\
      \hline
      \# of ports & $N_a$ & $N_a$ & $N_t$\\
      \hline
      Generality & high  & UPA only & UPA only\\
      \hline
    \end{tabular}
  \end{center}
\end{table}
Table \ref{Tb:SchemesComparison} is the comparison of the proposed schemes and the Enhanced Type II codebook of the Rel-16 in terms of UE complexity, feedback overhead, required number of antenna ports, and generality. Since the Enhanced Type II codebook adopts non-precoded CSI-RS, the number of antenna ports is equal to the number of BS antennas $N_t$. The PCR-D and Enhanced Type II codebook both rely on 2D DFT for CSI compression, which limits their applicability to UPA topology of BS antennas. However, the PCR and PCR-E codebooks are more general and applicable to other BS antenna topologies as well.
\begin{figure}[h]
  \centering
  \includegraphics[width=3.5in]{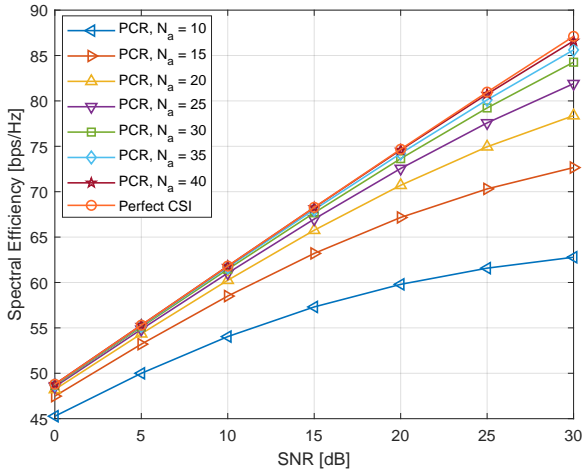}\\
  \caption{The spectral efficiency vs. SNR, PCR with different feedback overhead, $N_t = 64$, CDL-A model.} \label{fig:64T2R_PCR_with_different_Nas}
\end{figure}

Then we show the impact of the number of antenna ports $N_a$ on the PCR scheme in Fig. \ref{fig:64T2R_PCR_with_different_Nas}. When the value of $N_a$ increases from 10 to 40, the performance grows monotonically and eventually saturates near the ideal case of perfect CSI.
\begin{figure}[h]
  \centering
  \includegraphics[width=3.5in]{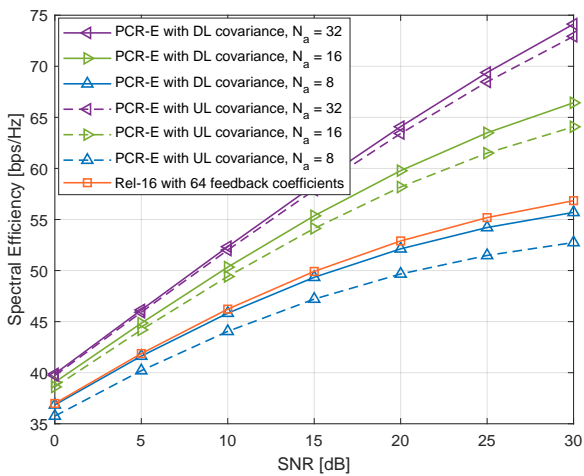}\\
  \caption{The spectral efficiency vs. SNR, PCR-E with different feedback overhead, $N_t = 32$, CDL-A model.} \label{fig:32T2R_PCRE_with_different_Nas}
\end{figure}

Now we show in Fig. \ref{fig:32T2R_PCRE_with_different_Nas} the performance of the PCR-E scheme with different feedback overheads, i.e., $N_a = 32, 16, 8$. In order to demonstrate the robustness of our scheme to inaccurate channel covariance matrices, e.g., $\mR^\text{(S)}$ and $\mR^\text{(F)}$, we also show the cases where the DL channel covariance matrices are replaced by the UL ones. In other words, the curves labeled with ``PCR-E with UL covariance" means the eigenvectors of the UL channel covariance matrices are used when computing the precoders. The transformations of the UL covariance to the DL version are not performed, which further reduces the computational complexity for the base station. It is interesting to note that even with such non-ideal channel covariance matrices, the degradation of the performance is tolerable. Moreover, the Rel-16 codebook with 64 feedback coefficients are also plotted as a benchmark. One may observe that the PCR-E scheme with much fewer feedback coefficients, i.e., 16 ($N_a = 8$), is comparable with this benchmark.
\begin{figure}[h]
  \centering
  \includegraphics[width=3.5in]{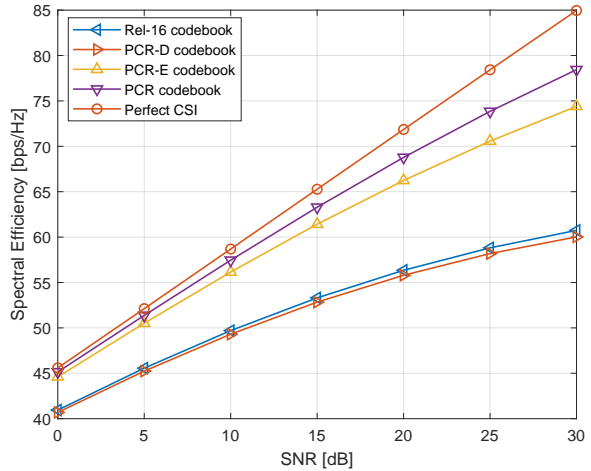}\\
  \caption{The spectral efficiency vs. SNR, $N_t = 64$, $N_a = 20$, CDL-D model.} \label{fig:64T2R_SE_vs_SNR_CDLD}
\end{figure}

Then, the CDL-D model is considered and the performance curves are shown in Fig. \ref{fig:64T2R_SE_vs_SNR_CDLD}. Since there are less paths and one of them is an LoS component in CDL-D, the channel matrix has more sparsity in terms of angles and delays compared to CDL-A. It leads to higher correlation between antennas and subcarriers, and therefore makes it possible to reconstruct the DL channel with less feedback coefficients. As we may observe in the figure, the PCR scheme with $N_a = 20$ is already very close to the perfect CSI setting.

\begin{figure}[h]
  \centering
  \includegraphics[width=3.5in]{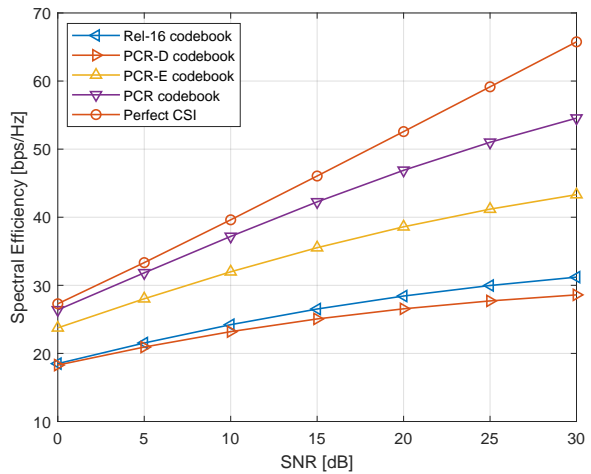}\\
  \caption{The spectral efficiency vs. SNR, $N_t = 32$, $N_a = 32$, 8 UEs, COST-2100 model.} \label{fig:32T1R_SE_vs_SNR_cost2100}
\end{figure}

Finally, the proposed schemes are evaluated under the COST-2100 channel model \cite{Cost2100} in Fig. \ref{fig:32T1R_SE_vs_SNR_cost2100}. The BS antenna topology is ULA with 32 antenna elements, and all 8 UEs have single antenna. The bandwidth is 20 MHz, which contains 51 RBs. The total number of feedback coefficients is 32 for all the schemes. One may observe that the proposed PCR and PCR-E codebooks outperforms the Enhanced Type II codebook under COST-2100 channel model.
\section{Conclusions}\label{sec:conclusion}
In this paper we dealt with the challenge of CSI acquisition in FDD massive MIMO. We first derive the ranks of the channel covariance matrices of the wideband large-scale MIMO channel for a given angle and delay distribution. The closed-form expression of the rank indicates that a low-rankness property is always valid for a UPA with half-wavelength or smaller antenna spacing, regardless of the richness of the scattering environment. We then proposed a PCR codebook scheme which exploits such a property and the partial reciprocity of UL/DL channels.
It outperforms the latest 5G codebook of Rel-16 at performance, UE complexity, feedback overhead, received SNR of CSI-RS, and generality.
We also proposed two alternatives named PCR-E and PCR-D codebook schemes, aiming at further reducing the complexity at the BS side at the cost of mild performance loss. Simulation results showed that the proposed methods are very close to the ideal case of perfect CSI even with only tens of scalar feedback coefficients for the whole wideband massive MIMO channel.

\appendix
\subsection{Proof of Theorem \ref{theoSpatialRank}:}\label{proof:theoSpatialRank}
\begin{proof}
In order to prove Theorem \ref{theoSpatialRank}, we need to first prove that non-overlapping angular supports lead to additive dimensionality of the signal subspace of the spatial channel covariance matrix.
\begin{lemma}\label{lemma:additiveRank}
Consider two angular supports $\zeta_1$ and $\zeta_2$. The boundaries are defined as
\begin{align}
\zeta_1 = \left\{ (\theta, \varphi) | \theta \in [\theta_1^\text{min}, \theta_1^\text{max}], \varphi \in [\varphi_1^\text{min}(\theta), \varphi_1^\text{max}(\theta)]\right\}, \\
\zeta_2 = \left\{ (\theta, \varphi) | \theta \in [\theta_2^\text{min}, \theta_2^\text{max}], \varphi \in [\varphi_2^\text{min}(\theta), \varphi_2^\text{max}(\theta)]\right\}.
\end{align}
The spatial covariance matrices corresponding to the angular supports are denoted as $\mR_1$ and $\mR_2$ respectively.
Define the linear space spanned by the 2-D steering vectors with corresponding angles inside the angular support $\zeta$ as
\begin{align}\label{Eq:Bcal}
\mathcal{B} \defi \text{span} \left\{ {\frac{{{\bf{a}}(\theta ,\varphi )}}{\sqrt{{{N_h}{N_v}}}},\left( {\theta ,\varphi } \right) \in \zeta } \right\}.
\end{align}

If the two angular supports are non-overlapping, i.e., $\zeta_1 \cap \zeta_2 = \emptyset$, then the null space of $\mR_1$ includes the linear space $\mathcal{B}_2$ when $N_h$ and $N_v$ are large:
\begin{align}
\text{null}({\mR_1}) \supset \mathcal{B}_2, \text{ as } N_h, N_v \rightarrow \infty.
\end{align}
\end{lemma}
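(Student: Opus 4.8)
The plan is to reduce the subspace inclusion $\text{null}(\mR_1)\supset\mathcal{B}_2$ to an \emph{asymptotic orthogonality} statement between individual steering vectors, and then to establish that statement by factorizing the array inner product into two one-dimensional Dirichlet kernels. First I would use the fact that $\mR_1$ is Hermitian positive semidefinite and, from the multipath model, admits the representation $\mR_1=\int_{\zeta_1}p(\theta,\varphi)\,{\bf a}(\theta,\varphi){\bf a}^H(\theta,\varphi)\,d\theta\,d\varphi$ for a nonnegative angular power spectrum $p$ supported on $\zeta_1$. A vector $\vv$ then lies in $\text{null}(\mR_1)$ if and only if $\vv^H\mR_1\vv=0$. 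Since $\mathcal{B}_2$ is by definition the span of the normalized steering vectors ${\bf a}(\theta',\varphi')/\sqrt{N_h N_v}$ with $(\theta',\varphi')\in\zeta_2$, and the null space is a linear subspace, it suffices to show that each such generator is annihilated, i.e. that $\vv^H\mR_1\vv\to 0$ as $N_h,N_v\to\infty$ when $\vv={\bf a}(\theta',\varphi')/\sqrt{N_h N_v}$.

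Substituting the integral form of $\mR_1$ gives $\vv^H\mR_1\vv=N_h N_v\int_{\zeta_1}p\,\lvert C_N\rvert^2\,d\theta\,d\varphi$, where $C_N\defi\frac{1}{N_h N_v}{\bf a}^H(\theta,\varphi){\bf a}(\theta',\varphi')$ is the normalized cross-correlation. Exploiting the Kronecker structure ${\bf a}={\bf a}_h\otimes{\bf a}_v$ and the mixed-product rule, $C_N$ factorizes as $D_{N_h}(\xi-\xi')\,D_{N_v}(\nu-\nu')$, a product of Dirichlet kernels $D_N(\Delta)\defi\frac{1}{N}\sum_{p=0}^{N-1}e^{j2\pi p\Delta}$ evaluated at the differences of the normalized spatial frequencies $\xi\defi D_h\sin(\theta)\sin(\varphi)/\lambda$ and $\nu\defi D_v\cos(\theta)/\lambda$. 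After the change of variables $(\theta,\varphi)\mapsto(\xi,\nu)$, the image $S_1$ of $\zeta_1$ must be shown disjoint from the single image point $(\xi',\nu')$ of $(\theta',\varphi')\in\zeta_2$; by compactness of the closed disjoint supports this yields a uniform $\ell^\infty$ separation $\delta>0$ such that every $(\xi,\nu)\in\overline{S_1}$ satisfies $\lvert\xi-\xi'\rvert\ge\delta$ or $\lvert\nu-\nu'\rvert\ge\delta$. Splitting $S_1$ into these two regions and using the tail bound $\lvert D_N(\Delta)\rvert^2\le \frac{1}{N^2\sin^2(\pi\delta)}$ on the separated coordinate together with the Parseval identity $\int\lvert D_N\rvert^2=1/N$ and $\lvert D_N\rvert\le 1$ on the remaining coordinate, each piece is $O(1/N_h)$ or $O(1/N_v)$ after multiplication by the prefactor $N_h N_v$, so $\vv^H\mR_1\vv\to 0$.

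The delicate parts are twofold. The geometric obstacle is justifying the separation $\delta$: I must argue that the angle-to-frequency map $(\theta,\varphi)\mapsto(\xi,\nu)$ is injective on the relevant range, so that disjointness in the angular domain is preserved in the spatial-frequency domain. With $\theta\in[0,\pi]$ and the azimuth confined to a half-space, $\cos(\theta)$ and $\sin(\theta)\sin(\varphi)$ jointly determine $(\theta,\varphi)$, while the half-wavelength (or smaller) spacing keeps $\xi,\nu$ within a single period and thereby rules out aliasing that would identify distinct physical directions. The analytic obstacle is that the prefactor $N_h N_v$ diverges, so mere pointwise decay of $C_N$ is insufficient; the argument hinges on canceling one factor of $N$ in each sub-region through the Parseval normalization $\int\lvert D_N\rvert^2=1/N$ and the uniform Dirichlet tail bound. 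Once these two ingredients are in place, the estimate is routine, and the same product-kernel argument extends verbatim to the frequency and joint spatial-frequency settings, paralleling the one-dimensional ULA computation underlying Lemma~2 of \cite{yin:13}.
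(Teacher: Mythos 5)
Your proposal follows essentially the same route as the paper's proof: reduce $\text{null}(\mR_1)\supset\mathcal{B}_2$ to showing the quadratic form $\frac{1}{N_h N_v}{\bf a}^H(\theta',\varphi')\,\mR_1\,{\bf a}(\theta',\varphi')$ vanishes for each generator of $\mathcal{B}_2$, write $\mR_1$ as an integral of steering-vector outer products over $\zeta_1$, Kronecker-factorize the inner product into horizontal and vertical geometric (Dirichlet) sums, and conclude from the fact that the two phase differences cannot vanish simultaneously when $(\theta',\varphi')\notin\zeta_1$. The only difference is that you make the final limit rigorous (compactness-based separation, Parseval mass $1/N$ plus the uniform tail bound, and the explicit half-space/no-aliasing caveat for injectivity of the angle-to-spatial-frequency map), whereas the paper simply asserts that the expression converges to zero; your version is, if anything, more complete.
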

\begin{proof}
\quad \emph{Proof:} We denote the joint angular power spectrum of the multipath as ${p_i}(\theta ,\varphi )$, which is finite and uniformly non-zero inside the angular support $\zeta_i$, for $i = 1, 2$. Without loss of generality, we normalize the total power to 1, such that
\begin{align}
{\left( {\int_{\theta _i^{\min }}^{\theta _i^{\max }} {\int_{\varphi _i^{\min }(\theta )}^{\varphi _i^{\max }(\theta )} {p_i}(\theta ,\varphi ) {d\varphi d\theta } } } \right) = 1}.
\end{align}
For ease of exposition, we use ${p_i}$ to represent ${p_i}(\theta ,\varphi )$ in the proof below.

The covariance matrix corresponding to the angular support $\zeta_i, i = 1, 2$ is written as
\begin{align}\label{Eq:RiE}
{\mR_i} &= \mathbb{E}\{ {\bf{a}}({\theta _i},{\varphi _i}){\left( {{\bf{a}}({\theta _i},{\varphi _i})} \right)^H}\} \\
 &= \int_{\theta _i^{\min }}^{\theta _i^{\max }} {\int_{\varphi _i^{\min }(\theta )}^{\varphi _i^{\max }(\theta )} {{\bf{a}}(\theta ,\varphi ){{\left( {{\bf{a}}(\theta ,\varphi )} \right)}^H}{p_i}d\varphi d\theta } }.
\end{align}
Taking an arbitrary angle $(\theta_2, \varphi_2) \in \zeta_2$, then we have
\begin{small}
\begin{align}
&\frac{{{{\left( {{\bf{a}}({\theta _2},{\varphi _2})} \right)}^H}}}{\sqrt{{N_h}{N_v}}}{{\bf{R}}_1}\frac{{{\bf{a}}({\theta _2},{\varphi _2})}}{\sqrt{{N_h}{N_v}}} \nonumber \\
& = \frac{{1}}{{{N_h}{N_v}}} \iint {p_1}{{{{\left( {{\bf{a}}({\theta _2},{\varphi _2})} \right)}^H}}{\bf{a}}(\theta ,\varphi ){{\left( {{\bf{a}}(\theta ,\varphi )} \right)}^H}{{\bf{a}}({\theta _2},{\varphi _2})}d\varphi d\theta } \nonumber \\
& = \frac{{1}}{{{N_h}{N_v}}} \iint {p_1} {\left( {{{\left( {{{\bf{a}}_h}({\theta _2},{\varphi _2}) \otimes {{\bf{a}}_v}({\theta _2})} \right)}^H}\left( {{{\bf{a}}_h}({\theta},{\varphi}) \otimes {{\bf{a}}_v}({\theta})} \right)} \right)^2}d\varphi d\theta \nonumber \\
& = \frac{{1}}{{{N_h}{N_v}}} \iint {p_1} {\left( {{{\left( {{{\bf{a}}_h}({\theta _2},{\varphi _2})} \right)}^{^H}}{{\bf{a}}_h}(\theta ,\varphi ){{\left( {{{\bf{a}}_v}({\theta _2})} \right)}^H}{{\bf{a}}_v}(\theta )} \right)^2}d\varphi d\theta \nonumber \\
& = \frac{{1}}{{{N_h}{N_v}}} \iint {p_1} \left( {\left( {\sum\limits_{n = 0}^{{N_h-1}} {{e^{\frac{{j2\pi n {D_h}}}{\lambda }\left( {\sin (\theta )\sin (\varphi ) - \sin ({\theta _2})\sin ({\varphi _2})} \right)}}} } \right.} \right) \nonumber \\
& \qquad \qquad {\left. {\left( {\sum\limits_{n = 0}^{{N_v-1}} {{e^{\frac{{j2\pi n{D_v}}}{\lambda }\left( {\cos (\theta ) - \cos ({\theta _2})} \right)}}} } \right)} \right)^2}d\varphi d\theta. \label{Eq:sumAll}
\end{align}
\end{small}
Since the angle pair $({\theta _2},{\varphi _2})$ is outside the boundary of the double integral $\zeta_1$, we readily have that the terms ${\sin (\theta )\sin (\varphi ) - \sin ({\theta _2})\sin ({\varphi _2})}$ and ${\left( {\cos (\theta ) - \cos ({\theta _2})} \right)}$  never take the value zero at the same time. Thus the expression (\ref{Eq:sumAll}) converges to zero as $N_h$ and $N_v$ go to infinity. Therefore Lemma \ref{lemma:additiveRank} is proved.
\end{proof}

Note that the linear space $\mathcal{B}_i$ defined in Eq. (\ref{Eq:Bcal}) is equivalent to the signal subspace of the spatial channel covariance matrix $\mR_i, i = 1, 2$, i.e.,
\begin{align}
\mathcal{B}_i = \text{span}\left\{ {\vu_n^{(i)}:n = 1, \cdots ,{r_i}} \right\},
\end{align}
where $\vu_n^{(i)}$ is the eigenvector of $\mR_i$ corresponding to its $n$-th greatest eigenvalue and $r_i$ denotes the rank of $\mR_i$.
According to Lemma \ref{lemma:additiveRank}, the signal subspaces of $\mR_1$ and $\mR_2$ are asymptotically orthogonal when $N_h$ and $N_v$ are large:
\begin{align}\label{Eq:b1b2}
\mathcal{B}_1 \bot \mathcal{B}_2, \text{ as } N_h, N_v \rightarrow \infty.
\end{align}
We define the angular support $\zeta_u$ as the union of $\zeta_1$ and $\zeta_2$:
\begin{align}
\zeta_u \defi \zeta_1 \cup \zeta_2,
\end{align}
and the spatial channel covariance matrix $\mR_u$ corresponding to the multipath angular support $\zeta_u$. Then the rank of the spatial channel covariance $\mR_u$, denoted by $r_u$, satisfies
\begin{align}
r_u = r_1  + r_2, \text{ as } N_h, N_v \rightarrow \infty.
\end{align}

The above observation implies that the rank of the spatial channel covariance matrix can be computed by summing up the contributions of all non-overlapping angular sub-supports that forms the entire range. More precisely, consider a spatial covariance $\mR_u$ of a certain user that has an angular support $\zeta_u$. Such a support is composed of $K$ non-overlapping sub-supports:
\begin{align}
\zeta_u &= \zeta_1 \cup \zeta_2 \cdots \cup \zeta_K, \\
\zeta_i & \cap \zeta_j = \emptyset, i, j, = 1, 2, \cdots, K, \forall i \neq j.
\end{align}
Then the rank of $\mR_u$ yields:
\begin{align}
r_u = \sum\limits_{i = 1}^K {{r_i}} , \text{ as } N_h, N_v \rightarrow \infty,
\end{align}
where ${r_i}$ is the rank of the spatial covariance matrix with angular support $\zeta_i$.
Therefore, finding the rank of the spatial covariance matrix is equivalent to computing the size of the angular support $\zeta_u$, which could be done by a double integral over the boundary of the angular support, that is
\begin{align}
& \mathop {\lim }\limits_{{N_h},{N_v} \to \infty } \frac{{{\rm{rank}}\{ {{\bf{R}}^{({\rm{S}})}}\} }}{{{N_h}{N_v}}} \label{Eq:doubleIntegral} \\
& = \int\limits_{\cos ({\theta ^{\max }})}^{\cos ({\theta ^{\min }})} {\int\limits_{\sin ({\varphi ^{\min }}(\theta ))}^{\sin ({\varphi ^{\max }}(\theta ))} {\frac{{{D_h}{D_v}}}{{{\lambda ^2}}}\sin (\theta )d\left( {\sin (\varphi )} \right)d\left( {\cos (\theta )} \right)} } \nonumber \\
& =  \frac{{{D_h}{D_v}}}{{{\lambda ^2}}}\int\limits_{{\theta ^{\min }}}^{{\theta ^{\max }}} {\int\limits_{{\varphi ^{\min }}(\theta )}^{{\varphi ^{\max }}(\theta )} {{{\left( {\sin (\theta )} \right)}^2}\cos (\varphi )d\varphi d\theta } } \nonumber \\
& = \frac{{{D_h}{D_v}}}{{{\lambda ^2}}}\int\limits_{{\theta ^{\min }}}^{{\theta ^{\max }}} {{{\sin }^2}(\theta )\left( {\sin \left( {{\varphi ^{\max }}(\theta )} \right) - \sin \left( {{\varphi ^{\min }}(\theta )} \right)} \right)d\theta }. \nonumber
\end{align}
Thus, Theorem \ref{theoSpatialRank} is proved.
\end{proof}
\subsection{Proof of Theorem \ref{theoJointRank}:}\label{proof:theoJointRank}
\begin{proof}
Based on the additive dimensionality of the signal subspaces proved in Lemma \ref{lemma:additiveRank}, we only derive the rank of the joint spatial-frequency channel covariance matrix $\mR_q^\text{(J)}$ with a certain sub-support $\eta_q$. In a way similar to prove Theorem \ref{theoSpatialRank}, we may compute the rank by a triple integral over the three-dimensional range of the sub-support.

\begin{align}
& \mathop {\lim }\limits_{{N_h},{N_v},N_f \to \infty } \frac{{{\rm{rank}}\{ {{\bf{R}}_q^{({\rm{J}})}}\} }}{{{N_h}{N_v}{N_f}}} = \\
& \frac{{{D_h}{D_v}\Delta f}}{{{\lambda ^2}}}\int\limits_{\cos ({\theta ^{\max }})}^{\cos ({\theta ^{\min }})} \int\limits_{\sin ({\varphi ^{\min }}(\theta ))}^{\sin ({\varphi ^{\max }}(\theta ))} \int\limits_{{\tau ^{\min }}(\theta ,\varphi )}^{{\tau ^{\max }}(\theta ,\varphi )} \sin (\theta ) \nonumber \\
& \qquad d\left( {\sin (\varphi )} \right)d\left( {\cos (\theta )} \right)   d\tau \nonumber \\
& = \frac{{{D_h}{D_v}\Delta f}}{{{\lambda ^2}}}\int\limits_{\theta _q^{\min }}^{\theta _q^{\max }} {\int\limits_{\varphi _q^{\min }(\theta )}^{\varphi _q^{\max }(\theta )} {\int\limits_{\tau _q^{\min }(\theta ,\varphi )}^{\tau _q^{\max }(\theta ,\varphi )} {{{\sin }^2}(\theta )\cos (\varphi )} } } d\varphi d\theta d\tau \nonumber \\
& = \frac{{{D_h}{D_v}\Delta f}}{{{\lambda ^2}}}\int\limits_{\theta _q^{\min }}^{\theta _q^{\max }} \int\limits_{\varphi _q^{\min }(\theta )}^{\varphi _q^{\max }(\theta )} {{\sin }^2}(\theta ) \cos (\varphi ) \nonumber \\
& \qquad \left( {\tau _q^{\max }(\theta ,\varphi ) - \tau _q^{\min }(\theta ,\varphi )} \right)d\varphi d\theta   \nonumber
\end{align}

Thus, Theorem \ref{theoJointRank} is proved.
\end{proof}

\subsection{Proof of Theorem \ref{theoFeedbackAmount}:}\label{proof:theoFeedbackAmount}
\begin{proof}
The operation of the joint spatial-frequency beamforming with $N_a$ dominant eigen-vectors of $\mR^{\text{(J)}}$ is equivalent to the following matrix multiplication
\begin{align}
\vc_s = (\mU_s^{\text{(J)}})^H {\underline{\bf{h}}},
\end{align}
where the columns of $\mU_s^{\text{(J)}}$ are the dominant eigenvectors corresponding to the $N_a$ largest eigenvalues of $\mR^{\text{(J)}}$:
\begin{align}
\mU_s^{\text{(J)}} = \left[ {\begin{array}{*{20}{c}}
{{\bf{u}}_1^{(J)}}&{{\bf{u}}_2^{(J)}}& \cdots &{{\bf{u}}_{{N_a}}^{(J)}}
\end{array}} \right],
\end{align}
where $\vc_s \in \mathbb{C}^{N_a \times 1}$ are the coefficients that UE feeds back to the base station. Thus the reconstructed CSI at the base station side is
\begin{align}
{\underline {\widehat {\bf{h}}}} = {\bf{U}}_s^{{\rm{(J)}}}{({\bf{U}}_s^{{\rm{(J)}}})^H}{\underline {\bf{h}}}.
\end{align}
The expression above is equivalent to an orthogonal project the vector ${\underline {\bf{h}} }$ to the column space of $\mU_s^{\text{(J)}}$. As proved in Theorem \ref{theoJointRank}, the rank of $\mR^{\text{(J)}}$, denoted by $r^{\text(J)}$, is no larger than $\rho^{{\rm{(J)}}} N_v N_h N_f$. We have that ${\underline {\bf{h}} }$ lives in the signal subspace of $\mR^{\text{(J)}}$:
\begin{align}
{\underline {\bf{h}}} \in \text{span}\left\{ {\vu_n^{(J)}:n = 1, \cdots ,{r^{\text{(J)}}}} \right\}.
\end{align}
As a result, when the number of feedback coefficients $N_a \geq {r^\text{(J)}}$,  we have
\begin{align}
\mathop {\lim }\limits_{{N_h},{N_v},{N_f} \to \infty } \frac{{\left\| {{{\underline {{\bf{\hat h}}} }} - {\underline {\bf{h}} }} \right\|_F^2}}{{\left\| {{{\underline {\bf{h}} }}} \right\|_F^2}} = 0,
\end{align}
which proves Theorem \ref{theoFeedbackAmount}.
\end{proof}

\bibliographystyle{IEEEtran}
\bibliography{bib/allCitations}
\begin{IEEEbiography}[{\includegraphics[width=1in,height=1.25in,clip,keepaspectratio]{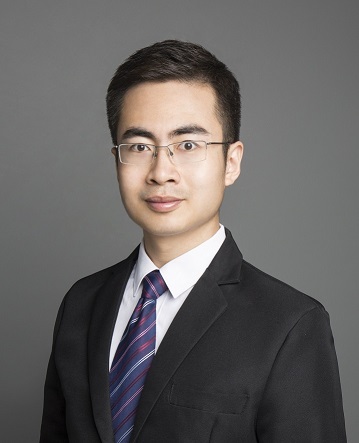}}]
{Haifan Yin} received the Ph.D. degree from T\'el\'ecom ParisTech in 2015. He received the B.Sc. degree in Electrical and Electronic Engineering and the M.Sc. degree in Electronics and Information Engineering from Huazhong University of Science and Technology, Wuhan, China, in 2009 and 2012 respectively. From 2009 to 2011, he has been with Wuhan National Laboratory for Optoelectronics, China, working on the implementation of TD-LTE systems as an R\&D engineer.
From 2016 to 2017, he has been a DSP engineer in Sequans Communications - an IoT chipmaker based in Paris, France. From 2017 to 2019, he has been a senior research engineer working on 5G standardization in Shanghai Huawei Technologies Co., Ltd., where he made substantial contributions to 5G standards, particularly the 5G codebooks. Since May 2019, he has joined the School of Electronic Information and Communications at Huazhong University of Science and Technology as a full professor.
His current research interests include 5G and 6G networks, signal processing, machine learning, and massive MIMO systems. H. Yin was the national champion of 2021 High Potential Innovation Prize awarded by Chinese Academy of Engineering, a winner of 2020 Academic Advances of HUST, and a recipient of the 2015 Chinese Government Award for Outstanding Self-financed Students Abroad.
\end{IEEEbiography}
\vfill

\begin{IEEEbiography}[{\includegraphics[width=1in,height=1.25in,clip,keepaspectratio]{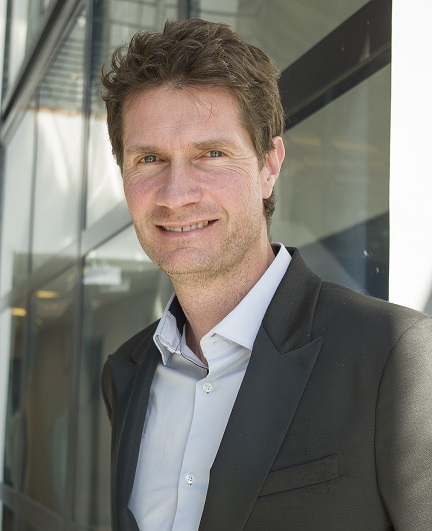}}]
{David Gesbert} (IEEE Fellow) is Professor and Director of EURECOM. He obtained the Ph.D. degree from Ecole Nationale Superieure des Telecommunications, France, in 1997. From 1997 to 1999 he has been with the Information Systems Laboratory, Stanford University. He was then a founding engineer of Iospan Wireless Inc, a Stanford spin off pioneering MIMO-OFDM (now Intel). Before joining EURECOM in 2004, he has been with the Department of Informatics, University of Oslo as an adjunct professor. D. Gesbert has published about 300 papers and 25 patents, some of them winning 2019 ICC Best Paper Award, 2015 IEEE Best Tutorial Paper Award (Communications Society), 2012 SPS Signal Processing Magazine Best Paper Award, 2004 IEEE Best Tutorial Paper Award (Communications Society), 2005 Young Author Best Paper Award for Signal Proc. Society journals, and paper awards at conferences 2011 IEEE SPAWC, 2004 ACM MSWiM. He has been a Technical Program Co-chair for ICC2017. He was named a Thomson-Reuters Highly Cited Researchers in Computer Science. Since 2015, he holds the ERC Advanced grant "PERFUME" on the topic of smart device Communications in future wireless networks. He is a Board member for the OpenAirInterface (OAI) Software Alliance. Since early 2019, he heads the Huawei-funded Chair on Advanced Wireless Systems Towards 6G Networks. He sits on the Advisory Board of HUAWEI European Research Institute.

\end{IEEEbiography}



\end{document}